\documentclass[a4paper]{amsart}
\usepackage{amssymb}
\usepackage[many]{tcolorbox}
\usepackage{graphicx} 
\usepackage{mathrsfs}
\usepackage{wasysym}
\usepackage{braket}
\usepackage{hyperref}
\usepackage{mathtools}
\usepackage{microtype}
\usepackage{enumitem}
\usepackage{xcolor}
\usepackage{mdwtab}

\usepackage{complexity}
\usepackage{slashed}
\usepackage[store-sets-label]{keytheorems}
\usepackage{zref-clever}
\usepackage[all]{xy}
\zcsetup{nameinlink,cap}
\usepackage{todonotes}
\newcommand{\ctimes}{\mathbin{\normalfont\Circle}}
\DeclareMathOperator{\atp}{\rm atp}
\DeclareMathOperator{\otp}{\rm otp}
\DeclareMathOperator{\sgn}{\rm sgn}
\DeclareMathOperator{\Gaif}{\rm Gaif}
\DeclareMathOperator{\dom}{\rm Dom}
\DeclareMathOperator{\im}{\rm Im}

\DeclareMathOperator{\Red}{\rm Red}
\newcommand{\trans}[3][R]{\textrm{W}_{\!#2,#3}^{#1}}
\newcommand{\strans}[3][R]{\overline{\textrm{W}}_{\!#2,#3}^{#1}}
\newcommand{\is}{\mathrel{:=}}
\newcommand{\overlay}{\mathbin{\vee}}
\newcommand{\simor}{\mathrel{\sim_{\rm or}}}
\newcommand{\sima}{\mathrel{\sim_\alpha}}
\newcommand{\oak}{\overrightarrow{\mathscr A_k}}
\newcommand{\obk}{\overrightarrow{\mathscr B_k}}
\newcommand{\osk}{\overrightarrow{\mathscr S_k}}
\newcommand{\oC}{\overrightarrow{\mathscr C}}
\newcommand{\oD}{\overrightarrow{\mathscr D}}
\newcommand{\oE}{\overrightarrow{\mathscr E}}
\newcommand{\oF}{\overrightarrow{\mathscr F}}

\newkeytheoremstyle{tcbp}{tcolorbox-no-titlebar={interior hidden, breakable,before skip=10pt,after skip=10pt,	outer arc=0pt, 	arc=0pt, blanker,
		borderline west={1pt}{0pt}{blue},
		top=3pt,
		right=3pt,
		left=5pt,
		bottom=3pt},
}
\newkeytheoremstyle{tcbthm}{tcolorbox-no-titlebar={interior hidden, breakable,before skip=10pt,after skip=10pt,	outer arc=0pt, 	arc=0pt, blanker,
		borderline west={1pt}{0pt}{red},
		top=3pt,
		right=3pt,
		left=5pt,
		bottom=3pt},
}
\newkeytheoremstyle{tcbethm}{tcolorbox-no-titlebar={interior hidden, breakable,before skip=10pt,after skip=10pt,	outer arc=0pt, 	arc=0pt, blanker,
		borderline west={1pt}{0pt}{gray},
		top=3pt,
		right=3pt,
		left=5pt,
		bottom=3pt},
}
\newkeytheoremstyle{tcblem}{tcolorbox-no-titlebar={breakable,before skip=10pt,after skip=10pt, blanker, borderline west={1pt}{0pt}{orange},
		top=3pt, right=3pt, left=5pt, bottom=3pt},
}
\newkeytheoremstyle{tcbslem}{tcolorbox-no-titlebar={breakable,before skip=10pt,after skip=10pt,  blanker, borderline={1pt}{0pt}{orange},
		top=13pt, right=13pt, left=15pt, bottom=13pt},
}
\newkeytheoremstyle{tcbpb}{tcolorbox-no-titlebar={breakable,before skip=10pt,after skip=10pt,	blanker,
		borderline west={1pt}{0pt}{cyan},
		top=3pt,
		right=3pt,
		left=5pt,
		bottom=3pt},
}
\newkeytheoremstyle{tcbrk}{tcolorbox-no-titlebar={breakable,before skip=10pt,after skip=10pt,	blanker,
		borderline west={2pt}{0pt}{gray},
		top=3pt,
		right=3pt,
		left=5pt,
		bottom=3pt},
}
\newkeytheoremstyle{tcbclaim}{tcolorbox-no-titlebar={breakable,before skip=10pt,after skip=10pt,	blanker,
		borderline west={1pt}{0pt}{yellow},
		top=3pt,
		right=3pt,
		left=5pt,
		bottom=3pt},
}

\newkeytheorem{theorem}[style=tcbthm,name=Theorem,parent=section]
\newkeytheorem{etheorem}[style=tcbethm,name=Theorem,sibling=theorem]
\newkeytheorem{definition}
[style=tcbp,name=Definition,sibling=theorem]
\newkeytheorem{lemma}[style=tcblem,name=Lemma,sibling=theorem]
\newkeytheorem{slemma}[style=tcbslem,name=Lemma,sibling=lemma]
\newkeytheorem{corollary}[style=tcbthm,name=Corollary,sibling=theorem]
\newkeytheorem{problem}[style=tcbpb,name=Problem]
\newkeytheorem{conjecture}[style=tcbpb,name=Conjecture]
\newkeytheorem{remark}[style=tcbrk,name=Remark,numbered=false]
\newkeytheorem{fact}[name=Fact,sibling=theorem]
\newkeytheorem{claim}[style=tcbclaim,name=Claim,sibling=theorem]
\zcRefTypeSetup{claim}{
	Name-sg = Claim ,
	name-sg = claim ,
	Name-pl = Claims ,
	name-pl = claims ,
}
\zcRefTypeSetup{etheorem}{
	Name-sg = Theorem ,
	name-sg = theorem ,
	Name-pl = Theorems ,
	name-pl = theorems ,
}
\thanks{\ERCagreement}
\author[H. Buffière]{Hector Buffière}
\address{Université Paris Cité, CNRS, IRIF, Paris, France  \and Centre d'Analyse et de Mathé\-ma\-tique Sociales CNRS UMR 8557, France}
\email{buffiere@irif.fr}
\author[Y. Lin]{Yuquan Lin}
\address{Southeast University, Nanjing, Jiangsu, China \and Centre d'Analyse et de Mathé\-ma\-tique Sociales CNRS UMR 8557, France.}
\email{yqlin@seu.edu.cn}
\author[P. Ossona de Mendez]{Patrice Ossona de Mendez}
\address{Centre d'Analyse et de Mathématique Sociales CNRS UMR 8557, France \and Computer Science Institute of Charles University (IUUK), Praha, Czech Republic}
\email{pom@ehess.fr}

\date{\today}
\newcommand{\ERCagreement}{
	{\footnotesize
		The second author is supported by the China Scholarship Council (CSC)	and  SEU Innovation Capability Enhancement Plan for Doctoral Students (CXJH\_SEU 24119).}\\[4pt]
	\noindent\begin{minipage}{.73\textwidth}
		\footnotesize
		This paper is part of a project that has received funding from the European Research Council (ERC) under the European Union's Horizon 2020 research and innovation program (grant agreement No 810115 -- {\sc Dynasnet}).
	\end{minipage}\hfill
	\begin{minipage}{.25\textwidth}
		\phantom{.}\hfill\includegraphics[height=13mm]{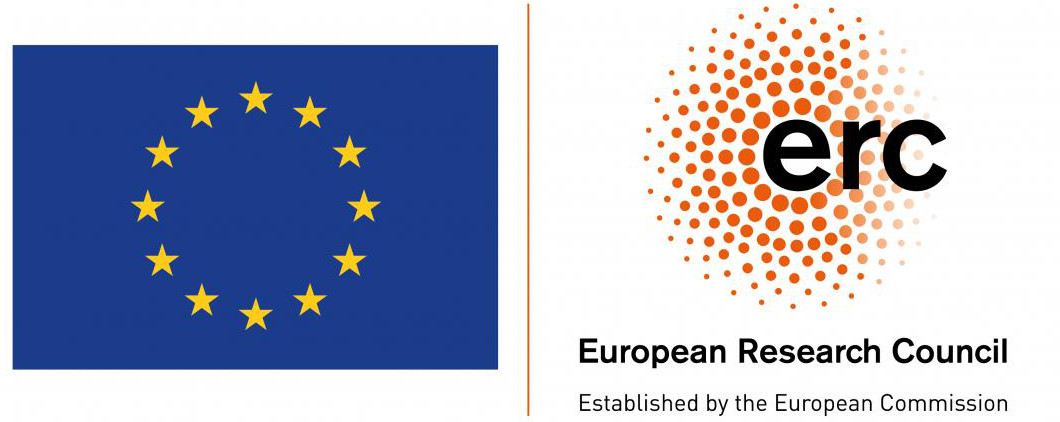}\hfill\phantom{.}
	\end{minipage}
}

\newenvironment{clproof}{ \trivlist
	\item[\hskip\labelsep
	\emph{Proof of the claim}.]\ignorespaces
}{\hfill$\vartriangleleft$\vspace{10pt}\par}

\date{\today}

\title[Monadic dependence from reducts]{Monadic dependence from reducts, and applications to twin-width of oriented graphs}

\begin{document}
	\begin{abstract}
    We study monadic dependence of binary relational structures including at least one antisymmetric relation. 

    Our cornerstone result gives sufficient conditions for proving that a structure is monadically dependent
    by only considering some of its reducts, assuming they are structurally well-behaved and compatible enough. 
    As an application, we consider some reorientation rules preserving monadic dependence of binary structures, as well as replacement of one antisymmetric relation with bounded independence number by another.
    Then, we apply our main (technical) result to the study of twin-width in two ways.

    First, generalizing the fact that twin-width boundedness is equivalent to being expandable by a linear
    order into a monadically dependent class, we prove that it is also equivalent to being expandable by
    an oriented graph with bounded independence number (for instance by a poset with bounded width or by a tournament), and that
    {\sf FO}-model checking is fixed-parameter tractable on such an expansion.

    Second, we show delineation by twin-width for some new classes of oriented graphs, including oriented
    split graphs and local tournaments. In all these cases, we also obtain fixed-parameter tractability of  {\sf FO}-model
    checking.
	\end{abstract}
	\maketitle

\section{Introduction}
\subsection{Background}
Dependence (or NIP) is a strong model theoretical dividing line~\cite{NIP}, whose importance in the structural and algorithmic study of hereditary classes of graphs is witnessed, for example, by the following conjecture.
\begin{conjecture}
\label{conj:mc}
    Under the standard assumption $\FPT\neq\AW[\ast]$, 
    first-order model checking is fixed parameter tractable on a hereditary class of graphs $\mathscr C$ if and only if $\mathscr C$ is dependent.
\end{conjecture}
Progress toward \zcref{conj:mc} includes a proof that \FO-model checking is {\FPT} on every hereditary stable class of graphs \cite{MCST}, on every hereditary dependent class of ordered graphs \cite{tww4}, classes with bounded twin-width \cite{twin-width1} (under the assumption of the availability of a contraction sequence with bounded width), and classes with bounded merge-width \cite{dreier2025merge} (under the assumption of the availability of a merge sequence with bounded width).

In the context of hereditary classes, which is standard in graph theory, dependence is equivalent to the stronger property of monadic dependence and allows an existential characterization \cite{braunfeld2022existential}, which simplifies its structural study. Indeed, some purely combinatorial characterizations of hereditary independent classes of graphs have been established \cite{dreier2024flipbreakability,svm2024}.
The characterization given in \cite{dreier2024flipbreakability} was obtained by first describing general forbidden configurations in monadically independent binary relational structures (called \emph{transformers}), which is an essential step toward characterization theorems for non-symmetric binary structures (see \cite{tows_arxiv} for instance).

The recently introduced  \emph{twin-width} invariant \cite{twin-width1} gets here a special importance for two main reasons. First, twin-width is strongly tied to monadic dependence for classes of ordered binary structures, as established by the following theorem.
\begin{etheorem}[\cite{tww4}]
\label{thm:tww4}
A class of binary structures has bounded twin-width if and only if it can be expanded into a monadically dependent class of linearly ordered binary structures.
\end{etheorem}
The second reason is algorithmic.
The twin-width invariant is built on the notion of a \emph{contraction sequence}, which can intuitively be seen as a way to iteratively identify near-twins in a graph, while keeping a cumulative error bounded. Assuming a contraction sequence is given, first-order model checking can be performed in linear time (parametrized by the width of the contraction sequence).
However, no general polynomial-time algorithm is known yet that is able to construct a contraction sequence with width $f(t)$ for graphs with twin-width at most $t$ in polynomial time (parametrized by $t$). Nevertheless, such contraction sequences can be efficiently computed in some special cases, including 
ordered graphs~\cite{tww5}, and tournaments \cite{geniet2025orderlogictwinwidthtournaments}. 

These examples  witness a particular property called \emph{delineation}:
A graph class $\mathscr C$ is  \emph{delineated} by twin-width (or simply,
delineated) if every subclass $\mathscr D$ of $\mathscr C$ has bounded twin-width
if and only if it is monadically dependent \cite{bonnet_et_al:LIPIcs.IPEC.2022.9}. More: for a class $\mathscr C$ of ordered graphs (resp. of tournaments), either the class $\mathscr C$ is independent and \FO-model checking is \AW[$\ast$]-hard, or the class $\mathscr C$ has bounded twin-width and \FO-model checking is \FPT. This behavior
applies as well to some other classes delineated, such as circle graphs~\cite{gajarskynew}, $H$-graphs (for any fixed
forest $H$) \cite{bonomo2025non}, and particular geometric graphs~\cite{geniet2025first}. 
On the other hand, the class of cubic graphs is not delineated by twin-width 
(as it is monadically dependent but has unbounded twin-width \cite{twin-width1}).

As noticed above, an important example of delineation is provided by the class of tournaments.
\begin{etheorem}[\cite{geniet2025orderlogictwinwidthtournaments}]
\label{thm:tourn}
The class of tournaments is delineated by twin-width. In other words,
a class of tournaments has bounded twin-width if and only if it is monadically dependent.    
\end{etheorem}

Moreover, \zcref{thm:tourn} extends to classes of oriented graphs with bounded independence number, which shows that 
a sufficiently dense ordered graph is sufficient to deduce the delineation.

\begin{etheorem}[\cite{geniet2025orderlogictwinwidthtournaments}]
\label{thm:alpha}
For every integer $k$,
the class of oriented graphs with independence number at most $k$ is delineated by twin-width.

Moreover, \FO-model checking is {\FPT} on every monadically dependent class of oriented graphs with independence number at most $k$.
\end{etheorem}

Note that a linear order is nothing but a transitive tournament, and it is thus natural to wonder to which extent the assumption on the expansion appearing in the statement of \zcref{thm:tww4} could be
weakened.

\begin{problem}
\label{pb:char}
    Does a class of binary structures have bounded twin-width if and only if it can be expanded into a monadically dependent class of  binary structures by addition of a tournament? of an oriented graph with independence number at most $k$ (for some fixed $k$)? If yes, is \FO-model checking {\FPT} on such an expansion?
\end{problem}

Hereditary classes delineated by twin-width seem to enjoy a nice characterization of when hereditary subclasses have fixed parameter tractable \FO-model checking: in known cases, as in the case of linearly ordered binary structures, either a subclass is not (monadically) dependent and \FO-model checking is $\AW[\ast]$-hard, or it is (monadically) dependent and \FO-model checking is \FPT. Note that this does not directly follow from delineation, as fixed parameter tractability of \FO-model checking for general classes with bounded twin-width is known only when contraction sequences can be computed in polynomial time (such as for monadically dependent classes of ordered binary structures \cite{tww5}, of oriented graphs with bounded independence number \cite{geniet2025orderlogictwinwidthtournaments}, or of special geometric graphs \cite{geniet2025first}).

\begin{problem}
\label{pb:del}
    Find more general delineated classes of oriented graphs than tournaments and orientations of graphs with bounded independence number and check whether \FO-model checking is {\FPT} on such classes.
\end{problem}

On the other hand, \zcref{thm:tourn,thm:alpha} suggests to study how the reorientation of an antisymmetric binary relation behaves with respect to monadic dependence and how the classes of oriented graphs differ from classes of unoriented graphs from the point of view of delineation.

\subsection{Our results}

Our main result establishes that the monadic dependence of a class of relational structures can, under some structural and compatibility conditions, be deduced from the monadic dependence of some of its reducts.
Precisely, we consider a binary relational signature $\sigma$, including two antisymmetric binary relations $R$ and $R'$ and wonder when the monadic dependence of a class $\mathscr C$ of $\sigma$-structures can be deduced from the monadic dependence of the class $\mathscr C^{\sigma\setminus\{R\}}$ (where the relation $R$ is forgotten) and the class $\mathscr C^{\{R,R'\}}$ (where only the relations $R$ and $R'$ are kept).
We prove that this is the case if the $R$-reduct $\mathbf M^R$ and $R'$-reduct $\mathbf M^{R'}$ of every structure $\mathbf M\in\mathscr C$ are oriented graphs in a class $\obk$, which is the class of all the orientations of graphs  excluding the biclique $K_{k,k}$ and the complete split graph $S_{k,k}$, and if some patterns in the $R$-reduct cannot coincide with some patterns in the $R'$-reduct
(Property $\Pi_{R,R'}^k$; see \zcref{fig:pis}).
Formally, we establish the following theorem.

\begin{figure}[ht]
    \centering
    \includegraphics[width=.75\linewidth]{Pis.pdf}
    \caption{The forbidden modifications from $\mathbf M^R$ to $\mathbf M^{R'}$.}
    \label{fig:pis}
\end{figure}

\getkeytheorem{thm:main}


We derive several applications of this result.
To begin with, we consider when an antisymmetric relation (defining an oriented graph reduct in $\obk$) can be ``reoriented'' while preserving monadic dependence. The formal statement of this result involves sequences of binary structures instead of classes and an equivalence relation $\simor$ on orientations. We quote it here for completeness, but leave its detailed explanation to \zcref{sec:reorient}.
\getkeytheorem{thm:reorient}
Then, in \zcref{sec:replace}, we follow similar arguments to study when an antisymmetric relation with bounded independence number can be replaced by another antisymmetric relation with bounded independence number. The formal statement of this result then involves an equivalence relation $\sima$ on oriented graphs with bounded independence number. Again, we quote the obtained theorem (where $\osk$ denotes the class of oriented graphs with independence number less than $k$) but refer to \zcref{sec:replace} for explanations.
\getkeytheorem{thm:calpha}

Then, we consider some applications to the study of twin-width of binary structures.
First, we give the following positive answer to \zcref{pb:char}.

\getkeytheorem{thm:nalpha}

In particular, intermediate between \zcref{thm:tww4} and \zcref{thm:nalpha}, we have the next characterizations.
\getkeytheorem{cor:tourn}

\getkeytheorem{cor:width}

We then address the general \zcref{pb:del}.
\zcref{thm:alpha} provides an example of a delineated class of oriented graphs such that the underlying class of unoriented graphs is not delineated. We extend \zcref{thm:alpha} by proving the class of all orientations of the graphs that have bounded independence number after removal of an independent set is delineated.

\getkeytheorem{thm:delin}

In particular, while the class of split graphs is not delineated (as it includes a class transduction-equivalent to the class of bipartite subcubic graphs; see \cite{bonnet_et_al:LIPIcs.IPEC.2022.9}), the situation changes when we consider orientations.

\getkeytheorem{cor:split}

Let us mention some algorithmic consequence of this corollary.
A \emph{quasi-kernel} of a digraph $\mathbf D$ is an independent set $K$ such that every 
vertex can reach $K$ by a directed path of  length at most two. 
\textsc{Quasi-Kernel} is the computational problem, whose input is a digraph and an integer $k$, that consists in deciding whether the input digraph has a quasi-kernel of size at most $k$.
It is known that \textsc{Quasi-Kernel} is $\W[2]$-complete  for split digraphs when $k$ is the parameter \cite{langlois2025quasi}.
However, it becomes {\FPT} (with parameter $k$) on any monadically dependent class of orientations of split graphs, as a consequence of \zcref{cor:split}. Note that (under the assumption $\FPT\neq\W[2]$) the  frontier between $\W[2]$-hardness and fixed parameter tractability of \textsc{Quasi-Kernel}  still needs to be made precise, even for classes of split digraphs.

\begin{problem}
Is \textsc{Quasi-Kernel} $\W[2]$-complete on the class of all orientations of split-graphs? Is it {\FPT} on monadically dependent classes of split digraphs?
\end{problem}

Note that the class of split digraphs is not delineated by twin-width (as the class of split graphs is not delineated by twin-width). However, \zcref{conj:mc} would imply that \textsc{Quasi-Kernel} is {\FPT} on monadically dependent classes of split digraphs.
\medskip

It was known that the class of circular arc graphs is delineated \cite{geniet2025first}. We prove here that this
is also the case of local tournaments, which 
are particular orientations of proper circular arc graphs. This extends \zcref{thm:tourn}, which states that the class of tournaments is delineated \cite{geniet2025orderlogictwinwidthtournaments}.

\getkeytheorem{thm:ltourn}

\subsection{Structure of the paper}
In \zcref{sec:prelim}, we recall some basic notions from
graph theory, order theory, and model theory, and introduce some definitions and notations used throughout this paper.
\zcref{sec:main} is dedicated to the proof of \zcref{thm:main}, which is our main result.
Applications of \zcref{thm:main} are given in \zcref{sec:app}, which include \zcref{thm:reorient,thm:calpha,thm:nalpha,thm:delin,thm:ltourn}. 
\newpage

\section{Definitions and notations}
\label{sec:prelim}
\subsection{Structures}
A \emph{relational signature} is a set $\sigma$ of relation symbols, each with an arity. A \emph{$\sigma$-structure} $\mathbf M$
is a set $M$ (the \emph{domain} of $\mathbf M$) and an 
interpretation of the relation symbols. (When $\sigma$ contains a single relation $R$, we use the term of $R$-structure instead of $\{R\}$-structure.)
For each $R\in\sigma$ with arity $k$, we denote by $R(\mathbf M)$ the set of $k$-tuples $(v_1,\dots,v_k)$ of elements of $M$ such that 
$\mathbf M\models R(v_1,\dots,v_k)$.

The \emph{atomic type} of a tuple $\bar v$ of elements of the domain of a $\sigma$-structure $\mathbf M$ is the set $\atp(\bar v)$ of all 
formulas $\phi(\bar x)$ satisfied by $\bar v$ in $\mathbf M$, where each
$\phi(\bar x)$ has the form $R(x_{i_1},\dots,x_{i_k})$ or $\neg R(x_{i_1},\dots,x_{i_k})$ with $R\in\sigma\cup\{=\}$ and $1\leq i_1,\dots,i_k\leq |\bar v|$.

The \emph{Gaifman graph} $\Gaif(\mathbf M)$ is a graph with same domain as $\mathbf M$, where two distinct vertices are adjacent if they belong together to some relation of $\mathbf M$, that is:
\[
\Gaif(\mathbf M)\models E(u,v)\qquad\iff\qquad
\exists R\in\sigma, \exists\bar z\subset \mathbf M,\ u,v\in\bar z\text{ and }\mathbf M\models R(\bar z).
\]

In this paper we only consider binary relational structures, meaning that each relation symbol has arity at most two.

Recall that a binary relation $R$ is 
\begin{itemize}
    \item  \emph{symmetric} in $\mathbf M$ if 
$\mathbf M\models R(x,y)\rightarrow R(y,x)$;
\item \emph{antisymmetric} in $\mathbf M$ if 
$\mathbf M\models (R(x,y)\wedge R(y,x))\rightarrow (x=y)$;
\item \emph{total} in $\mathbf M$ if
$\mathbf M\models (\neg R(x,y)\wedge \neg R(y,x))\rightarrow (x=y)$;
\item \emph{transitive} in $\mathbf M$ if
$\mathbf M\models (R(x,y)\wedge R(y,z))\rightarrow R(x,z)$.
\end{itemize}
\medskip


Let $\sigma'\subseteq \sigma$. The \emph{$\sigma'$-reduct} 
$\mathbf M^{\sigma'}$ of a $\sigma$-structure $\mathbf M$ with domain $M$ is the $\sigma'$-structure with domain $M$, such that
$R(\mathbf M^{\sigma'})=R(\mathbf M)$ for every $R\in\sigma'$.
If $\mathbf N$ is a $\sigma'$-reduct of a $\sigma$-structure $\mathbf M$, then $\mathbf M$ is a \emph{$\sigma$-expansion} of $\mathbf N$ or an \emph{expansion} of $\mathbf N$ by relations in $\sigma\setminus\sigma'$. When $\sigma'=\{R\}$ we use the notation $\mathbf M^R$ instead of $\mathbf M^{\{R\}}$ and we say
that $\mathbf M^R$ is the $R$-reduct of $\mathbf M$.
We extend the above notation to classes of structures: if $\mathscr C$ is a class of structures, then
$\mathscr C^R=\{\mathbf M^R\colon \mathbf M\in\mathscr C\}$.

\subsubsection{Transductions and monadic dependence}

For $k\in\mathbb N$, the \emph{copy operation} $\mathsf C_k$ maps a $\sigma$-structure $\mathbf M$ to the $\sigma\cup\{Z\}$-structure~$\mathsf C_k(\mathbf M)$ obtained by taking $k$ disjoint copies of $\mathbf M$ and making the clones of an element of $\mathbf M$ adjacent in $Z$. 

A unary predicate is also called a \emph{color}. 
A \emph{monadic expansion} or \emph{coloring} of a \mbox{$\sigma$-structure~$\mathbf  M$} is a \mbox{$\sigma^+$-structure~$\mathbf  M^+$}, where $\sigma^+$
is obtained from $\sigma$ by adding unary relations, such 
that $\mathbf  M$ is the $\sigma$-reduct of $\mathbf  M^+$.
For a set $\Sigma$ of colors, the \emph{coloring operation} $\Gamma_\Sigma$ maps a $\sigma$-structure $\mathbf  M$ to the set~$\Gamma_\Sigma(\mathbf  M)$ of all its $\Sigma$-colorings.

A \emph{simple interpretation} $\mathsf I$ of $\tau$-structures in $\sigma$-structures consists  of a $\sigma$-formula $\nu(x)$ and a $\sigma$-formula $\rho_R(\bar x)$ for each  $R\in \tau$ (with arity~$|\bar x|$). For a $\sigma$-structure $\mathbf M$, 
$\mathsf I(\mathbf M)$ is the $\tau$-structure $\mathbf N$ with domain $\nu(\mathbf M)$, such that $R(\mathbf N)=\{\bar a\subseteq\mathbf N\colon \mathbf M\models\rho_R(\bar a)\}$ for every $R\in\tau$.

A \emph{transduction} $\mathsf T$ is a composition
of copy operations, monadic expansions, and simple interpretations.
Every transduction $\mathsf T$ is equivalent to the composition
$\mathsf I\mathbin{\circ}\Gamma_\Sigma\mathbin{\circ} \mathsf C_k$ of a copy operation $\mathsf C_k$, a
coloring operation~$\Gamma_\Sigma$, and a simple interpretation~$\mathsf I$ of $\tau$-structures in $\Sigma$-colored \mbox{$\sigma$-structures} \cite{SBE_TOCL}. 
Hence, for every $\sigma$-structure we have
$\mathsf T(\mathbf  M)=\{\mathsf I(\mathbf  M^+): \mathbf  M^+\in\Gamma_\Sigma(\mathsf C_k(\mathbf  M))\}$. 

For a class $\mathscr M$ of $\sigma$-structures we define $\mathsf T(\mathscr M)=\bigcup_{\mathbf  M\in \mathscr M} \mathsf T(\mathbf  M)$. 
We say that a class $\mathscr N$ can be transduced in a class~$\mathscr M$ if there exists a transduction $\mathsf T$ such that $\mathscr N\subseteq \mathsf T(\mathscr M)$. 
A class $\mathscr C$ is  \emph{monadically dependent} if all its monadic expansions are dependent. 
As shown by Baldwin and Shelah~\cite{BS1985monadic},  a class $\mathscr M$ is monadically dependent if and only if one cannot transduce the class of all finite graphs from $\mathscr M$. 
As a  consequence of the simple observation that transductions compose, we get that all the transductions of a monadically dependent class are monadically dependent. 

\subsection{Graphs and orders}
A \emph{loop} of a binary relation $R$ in a structure $\mathbf M$ is an element $v$ such that $\mathbf M\models R(v,v)$.
As loops do not change properties like monadic dependence and can be encoded as unary relations, we will assume that our binary relations have no loops, that is are anti-reflexive.

By \emph{graph} we mean a relational structure with a single binary symmetric relation (called \emph{adjacency relation}), and
by \emph{oriented graph} we mean a relational structure with a single binary antisymmetric relation (called \emph{arc relation}). An \emph{orientation} of a graph $G$ is an \emph{oriented graph} with Gaifman graph $G$.
A \emph{tournament} is an oriented graph whose arc relation is total. A \emph{transitive tournament} (or \emph{linear order}) is a tournament, whose arc relation is transitive. The binary relation of a linear order will usually be denoted by $<$.
A \emph{local tournament} is an oriented graph such that the in-neighborhood and the out-neighborhood of each vertex induce tournaments.

Let $\mathbf M$ be a relational structure and let $X$ be a subset of the domain of $\mathbf M$. The substructure of $\mathbf M$ \emph{induced by} $X$ is denoted by $\mathbf M[X]$.

In this paper, by an independent set or a clique of a binary structure, we mean an independent set or a clique of its Gaifman graph. Precisely,
a subset $I$ of the domain of $\mathbf M$ is \emph{independent} if there exist no $u\neq v$ in $I$ with $\mathbf M\models R(u,v)$ for some binary relation $R$ of $\mathbf M$.
The \emph{independence number} $\alpha(\mathbf M)$ of $\mathbf M$ is the maximum cardinality of an independent set of $\mathbf M$.
To the opposite, a subset $K$ of the domain of $\mathbf M$ is a \emph{clique} if for every $u\neq v$ in $K$ there is some binary relation $R$ of $\mathbf M$ 
with $\mathbf M\models R(u,v)$.
The \emph{clique number} $\omega(\mathbf M)$ of $\mathbf M$ is the maximum cardinality of a clique of $\mathbf M$.

A \emph{partial order} is an (anti-reflexive) antisymmetric and transitive relation and a \emph{poset} is a relational structure with a single relation that is a partial order.
The \emph{comparability graph} $G(\mathbf P)$ of a poset $\mathbf P$ is the undirected graph with same domain as $\mathbf P$, where two vertices are adjacent if they are distinct and comparable in $\mathbf P$. Hence, the comparability graph of $\mathbf P$ is the Gaifman graph of $\mathbf P$.
A \emph{chain} (resp.\ an \emph{antichain}) of $\mathbf P$ is a set of elements that are pairwise comparable (resp.\ non-comparable). The \emph{height} (resp.\ the \emph{width}) of a poset $\mathbf P$ is the maximum cardinality of a chain (resp.\ an antichain) of $\mathbf P$.
Note, according to our definitions of the clique number and the independence number of a structure,  the height of a poset $\mathbf P$ is $\omega(\mathbf P)$ and its width is $\alpha(\mathbf P)$.

We recall (or introduce) some notations for graphs of interest.
\begin{itemize}
    \item $K_a$ is the \emph{complete graph} of order $a$;
    \item $K_{a,b}$ is the \emph{complete bipartite graph} with parts of size $a$ and $b$;
    \item $S_{a,b}$ the \emph{complete split graph} $\overline{aK_1\cup K_b}$;
    \item $M_a$ is the \emph{matching} $aK_2$;
    \item $L(K_{n,n})$ the line graph of the complete bipartite graph $K_{n,n}$, that is the graph with vertex set $[n]\times[n]$ where $(i,j)$ is adjacent to $(i',j')$ if $i=i'$ or $j=j'$;
    \item $K_k\ctimes K_1$ is the graph obtained by attaching a pendant vertex at each vertex of a clique of size $k$ (the symbol $\ctimes$ denotes the so-called \emph{corona product})
\end{itemize}


We further denote by  $\mathscr B_k$ the class of all the $\{K_{k,k},S_{k,k}\}$-free graphs and by $\obk$ the class of all the orientations of graphs in $\mathscr B_k$.

\newpage
\subsection{Order types}

Let $d\in\mathbb N$ and $\bar a,\bar b\in\mathbb Z^d$.
The \emph{order type} of $(\bar a,\bar b)$ is the vector
$$\otp(\bar a,\bar b)\is (\sgn(b_i-a_i))_{1\leq i\leq d},$$
where $\sgn(x)$ is the sign function defined by
\[
\sgn(x)\is\begin{cases}
	-1&\text{if }x<0,\\
	0&\text{if }x=0,\\
	1&\text{if }x>0.
\end{cases}
\]

We define the set $\mathscr O=\{-1,0,1\}\times\{-1,0,1\}$ of all order types and its subset $\mathscr O^+=\{-1,1\}\times\{-1,1\}$.

The next easy lemma will be useful for working with order types.

\begin{lemma}
	\label{lem:diag}
	For every $\tau_1,\tau_2\in\mathscr O$, every $\tau\in\mathscr O^+$, and  every integer $n$
	there exist $\bar a_1,\dots,\bar a_n,\bar b_1,\dots,\bar b_n\in [2n]\times[2n]$ such that
	\begin{align*}
		\otp(\bar a_i,\bar a_j)&=\tau_1&(1\leq i<j\leq n);\\
		\otp(\bar a_i,\bar b_j)&=\tau&(1\leq i,j\leq n);\\
		\otp(\bar b_i,\bar b_j)&=\tau_2&(1\leq i<j\leq n).
	\end{align*}
\end{lemma}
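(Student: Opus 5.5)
The construction will be done coordinatewise, since order types are computed coordinatewise. In each coordinate $c\in\{1,2\}$ I will choose integers $a_{1,c},\dots,a_{n,c}$ and $b_{1,c},\dots,b_{n,c}$ in $[2n]$ so that three conditions hold:
\begin{itemize}
\item $\sgn(a_{j,c}-a_{i,c})=(\tau_1)_c$ for all $i<j$;
\item $\sgn(b_{j,c}-b_{i,c})=(\tau_2)_c$ for all $i<j$;
\item $\sgn(b_{j,c}-a_{i,c})=\tau_c$ for all $i,j$.
\end{itemize}
Setting $\bar a_i=(a_{i,1},a_{i,2})$ and $\bar b_j=(b_{j,1},b_{j,2})$ then gives all three required identities at once.

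Fix a coordinate $c$. Because $\tau\in\mathscr O^+$, we have $\tau_c\neq 0$, so we need every $b$-value to lie strictly on one side of every $a$-value. We split $[2n]$ into two blocks, $L=[1,n]$ and $U=[n+1,2n]$. If $\tau_c=1$, the $a$-values go in $L$ and the $b$-values in $U$; if $\tau_c=-1$, the roles are swapped. This makes the third condition automatic.

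Inside its block of $n$ consecutive integers, each family is arranged according to its prescribed sign $s\in\{-1,0,1\}$:
\begin{itemize}
\item if $s=1$, take the values increasing in $i$ (for example $L$ or $U$ listed in order);
\item if $s=-1$, take them decreasing;
\item if $s=0$, take all $n$ values equal to a single element of the block.
\end{itemize}
This gives the first two conditions, with all values in $[2n]$.

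There is no real obstacle here. The one point to check is that $\tau$ has no zero coordinate, and this is exactly what makes the two-block separation possible. It is also the reason the lemma needs $\tau\in\mathscr O^+$, whereas $\tau_1$ and $\tau_2$ may be arbitrary in $\mathscr O$.
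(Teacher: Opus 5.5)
Your proof is correct and is essentially the paper's argument. The paper takes $\bar c_i=(i-1)\tau_1$ and $\bar d_j=2n\tau+(j-1)\tau_2$, where the large shift $2n\tau$ separates the two families in each coordinate just as your blocks $[1,n]$ and $[n+1,2n]$ do, and then compresses the coordinates into $[2n]$ by increasing injections; placing the values directly in the two blocks, as you do, simply removes the need for that compression step.
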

\begin{proof}
	Let $\bar c_i=(i-1)\tau_1$ and $\bar d_j=2n\tau+(j-1)\tau_2$.
	Then, $\otp(\bar c_i,\bar c_j)=\tau_1$ (for all $1\leq i<j\leq n$),  $\otp(\bar c_i,\bar d_j)=\tau$ (for all $1\leq i,j\leq n$), and $\otp(\bar d_i,\bar d_j)=\tau_2$ (for all $1\leq i<j\leq n$).
	
	Let $X$ be the set of all the first coordinates appearing in the vectors $\bar c_1,\dots,\bar c_n,$ $\bar d_1,\dots,\bar d_n$ and $Y$ be the set of all the second coordinates appearing in the vectors $\bar c_1,\dots,\bar c_n,$ $\bar d_1,\dots,\bar d_n$.
	Let $f_X:X\rightarrow [2n]$ (resp. $f_Y:Y\to[2n]$) be an increasing injection from $X$ (resp. $Y$)
	to $[2n]$.
	Let $\bar a_i$ (resp. $\bar b_j$) be obtained by 
	replacing the first coordinate of $\bar c_i$ (resp. $\bar d_j$) by its image under $f_X$ and the
	second coordinate by its image under $f_Y$.
	Then, it is clear that the vectors $\bar a_i$ and $\bar b_j$ satisfy the same order type properties as the vectors $\bar c_i$ and $\bar d_j$.   
\end{proof}

\subsection{Regular maps}
Let $I$ and $J$ be subsets of $\mathbb Z$ and
let $f$ and $g$ be mappings from $I\times J$ to $M$. We say that 
$(f,g)$ is \emph{regular} in $\mathbf M$ if there exists a mapping $\zeta_{f,g}$, such that in $\mathbf M$ we have
\[
\atp(f(\bar a),g(\bar b))=\zeta_{f,g}(\otp(\bar a,\bar b)).
\]

We say that $f$ is \emph{regular} in $\mathbf M$ if $(f,f)$ is regular in $\mathbf M$. We denote $\dom f$ and $\im f$ the domain and the image of $f$.

Let $(f,g)$ be a regular pair of maps in a $\sigma$-structure $\mathbf M$ and let $R\in\sigma\cup\{=\}$. We define
\begin{align*}
	\trans{f}{g}&\is\{\otp(\bar a,\bar b)\colon \mathbf M\models R(f(\bar a),g(\bar b))\};\\
	\strans{f}{g}&\is\trans{f}{g}\cup\trans{g}{f}.
\end{align*}

\begin{fact}
\label{fact:triv}
As equality is an equivalence relation, we deduce from \zcref{lem:diag} that if $f$ is  regular, then
\begin{itemize}
	\item either $\trans[=]ff\cap\mathscr O^+\neq\emptyset$ and then  
    $\trans[=]ff=\mathscr O$ (i.e. $f$ is a constant map) and we say that $f$ is \emph{constant},
	\item or $\trans[=]ff=\{(0,-1),(0,0),(0,1)\}$ (i.e. $f(\bar a)=f(\bar b)$ if and only if $\bar a$ and $\bar b$ have the same first coordinate), in which case we call $f$ a \emph{repeated line},
	\item or $\trans[=]ff=\{(-1,0),(0,0),(1,0)\}$ (i.e. $f(\bar a)=f(\bar b)$ if and only if $\bar a$ and $\bar b$ have the same second coordinate), in which case we call $f$ a \emph{repeated column},
	\item or $\trans[=]ff=\{(0,0)\}$ (i.e. $f$ is injective), in which case we call $f$ a \emph{regular mesh}.
\end{itemize}
\end{fact}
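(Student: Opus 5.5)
Since $f$ is regular, $\atp(f(\bar a),f(\bar b))$ depends only on $\otp(\bar a,\bar b)$. In particular, the set $W\is\trans[=]ff\subseteq\mathscr O$ determines exactly when $f(\bar a)=f(\bar b)$. Two facts hold immediately. First, $(0,0)\in W$. Second, $W$ is closed under negation $\tau\mapsto-\tau$, because equality is symmetric and $\otp(\bar b,\bar a)=-\otp(\bar a,\bar b)$. The remaining work is to use transitivity of equality, realized through configurations provided by \zcref{lem:diag}. I will carry this out in two steps, following the case split of the statement.

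\emph{Step 1: some $\tau\in\mathscr O^+$ lies in $W$.} I will show that $W=\mathscr O$. Take an arbitrary $\tau_1\in\mathscr O$. Apply \zcref{lem:diag} with this $\tau_1$, any $\tau_2$, the given $\tau$, and $n\ge 2$; we may take $\tau_1\ne(0,0)$, since $(0,0)\in W$ already. This yields points with $\otp(\bar a_1,\bar a_2)=\tau_1$, $\otp(\bar a_1,\bar b_1)=\tau$ and $\otp(\bar a_2,\bar b_1)=\tau$. Hence $f(\bar a_1)=f(\bar b_1)=f(\bar a_2)$, so $\tau_1\in W$. Because all of $\dom f$ realizes order types, every pair of points now has equal images, so $f$ is constant. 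The one point to check is that the lemma's points lie in $\dom f$. This should be harmless: the domain is $I\times J$, presumably with $I$ and $J$ large intervals, so we can rescale the lemma's points into $I\times J$, since only order types matter.

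\emph{Step 2: $W\cap\mathscr O^+=\emptyset$.} Then $W\subseteq\{(0,0),(0,\pm1),(\pm1,0)\}$, and $W$ is symmetric. The claim is that $(0,1)$ and $(1,0)$ cannot both lie in $W$. Suppose they did. Choose $\bar a=(0,0)$, $\bar b=(0,1)$ and $\bar c=(1,1)$. Then $\otp(\bar a,\bar b)=(0,1)$ and $\otp(\bar b,\bar c)=(1,0)$, so $f(\bar a)=f(\bar c)$. But $\otp(\bar a,\bar c)=(1,1)\in\mathscr O^+$, a contradiction. This leaves exactly three symmetric possibilities: $\{(0,0)\}$, which is injectivity (a regular mesh); $\{(0,0),(0,\pm1)\}$, which means the image depends exactly on the first coordinate (a repeated line); and $\{(0,0),(\pm1,0)\}$ (a repeated column).

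The only delicate point is the domain issue in Step 1, namely making sure the configurations from \zcref{lem:diag} fit inside $I\times J$. I expect it to be resolved by the implicit assumption that $I$ and $J$ are sufficiently large (for instance $[2n]$ with $n\ge2$), combined with the order-type invariance of all the conditions involved.
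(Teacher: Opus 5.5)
Your proof is correct and is exactly the argument the paper compresses into ``equality is an equivalence relation, plus \zcref{lem:diag}''. You use symmetry and transitivity of equality on configurations from \zcref{lem:diag} when $\trans[=]ff$ meets $\mathscr O^+$, and a direct three-point configuration to exclude that both $(0,1)$ and $(1,0)$ lie in $\trans[=]ff$. Your caveat about the domain is well founded: the statement genuinely needs $|I|,|J|\geq 3$. For instance, on $[2]\times[2]$ with all relations empty, setting $f(1,1)=f(2,2)$ and taking the other two values distinct gives a regular map with $\trans[=]ff=\{(0,0),(1,1),(-1,-1)\}$. This size condition holds in every use in the paper, where the domain is $[n]\times[n]$ with $n$ large.
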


Let us emphasize some further basic properties of regular maps:
\begin{itemize}
	\item if  $(f,g)$ is regular, then $R$ is antisymmetric if and only if 
	\[\forall \tau\neq(0,0)\qquad\tau\in\trans{f}{g}\Longrightarrow -\tau\notin\trans{g}{f};\]
	\item if $f$ is regular, then $R$ is reflexive if and only if $(0,0)\in\trans{f}{f}$;
	\item if  $f$ is regular, then $R$ is transitive if and only if 
	\[\tau,\tau'\in\trans{f}{f}\Longrightarrow\otp(\tau+\tau')\in\trans{f}{f}.\]
\end{itemize}

We say that $f$ \emph{defines} $\mathbf N$ in $\mathbf M$ if $\mathbf M[\im f]$ is isomorphic to $\mathbf N$.

\subsection{Meshes and transformers} 
In this section, we recall the definitions of meshes and transformers introduced in~\cite{dreier2024flipbreakability}, restating them using the terminology introduced above (order types, regular functions). Then, we introduce the notion of a \emph{full transformer}, which  compensates for minor weaknesses in the definition of transformers.

An \emph{$I\times J$-mesh} of a $\sigma$-structure $\mathbf M$
is an injective mapping $\mu:I\times J\rightarrow M$.
(Hence, a \emph{regular mesh} is an injective regular map.)

For $I\subseteq \mathbb Z$, we define 
\[\mathring{I}=I\setminus\{\min I,\max I\}\]
and we denote by $\mathring{\mu}$ the restriction of $\mu$ to
$\mathring{I}\times\mathring{J}$.

Recall that a pair mesh $(\mu,\mu')$ of $I\times J$-meshes is \emph{regular}
if there exists a map $F$ such that
$$\atp(\mu(\bar a),\mu'(\bar b))=F(\otp(\bar a,\bar b)).$$
In the case where $F$ is a constant map, the pair $(\mu,\mu')$ is \emph{homogeneous}. A pair of $I\times J$-meshes is \emph{conducting} if it is regular but not homogeneous.

A regular $I\times J$-mesh $\mu$ is \emph{vertical} if
there exists a function $\alpha:I\rightarrow M$ and a non-constant map $F$ such that
\[
\atp(\alpha(i),\mu(i',j'))=F(\otp(i,i')).
\]
The function $\alpha$ is a \emph{vertical guard} of $\mu$.

Equivalently, $\mu$ is vertical if there exists a repeated line $\mu':I\times J\rightarrow M$ such that $(\mu',\mu)$ is conducting. This repeated line can be defined by $\mu'(i,j)\is\alpha(i)$.

A regular $I\times J$-mesh $\mu$ is \emph{horizontal} if
there exists a function $\beta:J\rightarrow M$ and a non-constant map $F$ such that
\[
\atp(\mu(i,j),\beta(j'))=F(\otp(j,j')).
\]
The function $\beta$ is a \emph{horizontal guard} of $\mu$.

Equivalently, $\mu$ is horizontal if there exists a repeated column $\mu':I\times J\rightarrow M$ such that $(\mu,\mu')$ is conducting. This repeated column can be defined by $\mu'(i,j)\is\beta(j)$.

A \emph{minimal transformer} of \emph{length} $h$ and \emph{order} $n$ of a $\sigma$-structure $\mathbf M$
is a sequence $(\mu_1,\dots,\mu_h)$ of $[n]\times[n]$-meshes such that 
\begin{enumerate}
	\item $\mu_s$ is vertical if and only if $s=1$;
	\item $\mu_s$ is horizontal if and only if $s=h$;
	\item $\mu_s\neq\mu_t$ if $s\neq t$;
	\item for all $1\leq s,t\leq h$, $(\mu_s,\mu_t)$ is regular if $s=t$, conducting if $|s-t|=1$, and homogeneous otherwise.
\end{enumerate}

The importance of minimal transformers is clear from the following characterization of monadically dependent classes of binary structures.

\begin{etheorem}[\cite{dreier2024flipbreakability}]
	\label{thm:trsf}
	A class $\mathscr C$ of binary structures is monadically dependent if and only if for every integer $h\in\mathbb N$ there is some integer $n$, such that no structure $\mathbf M\in\mathscr C$ contains a minimal transformer of length $h$ and order $n$.
\end{etheorem}

We now introduce the notion of a \emph{full transformer}. A minor weakness in the definition of transformer is that the set of vertices witnessing that the first (resp.\ the last) mesh is vertical (resp.\ horizontal) is not part of the transformer and lacks some regularity properties. 

\begin{definition}
A \emph{full transformer} of length $h$ and order $n$ of $\mathbf M$ is a sequence
$(\mu_0,\dots,\mu_{h+1})$, where $(\mu_1,\dots,\mu_h)$ is a minimal transformer, $\mu_0$ is a repeated line, $\mu_{h+1}$ is a repeated column, and  $(\mu_0,\mu_1)$ and $(\mu_h,\mu_{h+1})$ are both conducting. 
A full transformer of length $h$ and order $n$ of $\mathbf M$ is \emph{tight} if $\mathbf M$ has no minimal transformer of length $h'<h$ and order at least $n-2$.
\end{definition}

The notion of tightness is stronger that the original notion of minimality of transformers. The
motivation for the introduction of this stronger property is that sometimes it is possible to restrict a non-vertical/horizontal mesh to a vertical/horizontal one by removing at most two coordinates, which is not such a big change when we consider transformers of very large orders.

The next lemma is an immediate corollary of \zcref{thm:trsf}.

\begin{lemma}
\label{lem:ftrsf}
	A class $\mathscr C$ of binary structures is monadically independent if and only if there exists an  integer $h\in\mathbb N$ and an integer $n_0$,  such that for every integer $n\ge n_0$, some structure $\mathbf M\in\mathscr C$ contains a tight full transformer of length $h$ and order $n$.
\end{lemma}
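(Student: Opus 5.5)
The statement is an equivalence, and the plan is to reduce both directions to \zcref{thm:trsf} by converting between minimal and tight full transformers.

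\emph{Sufficiency.} Suppose that for some $h$ and all $n\ge n_0$ some $\mathbf M\in\mathscr C$ contains a tight full transformer $(\mu_0,\dots,\mu_{h+1})$ of length $h$ and order $n$. By definition $(\mu_1,\dots,\mu_h)$ is a minimal transformer of length $h$ and order $n$ in $\mathbf M$. Since $n$ is unbounded, \zcref{thm:trsf} gives that $\mathscr C$ is monadically independent.

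\emph{Necessity.} Suppose $\mathscr C$ is monadically independent. By \zcref{thm:trsf} there is a length $h$ such that, for every $n$, some $\mathbf M\in\mathscr C$ contains a minimal transformer of length $h$ and order $n$. Let $h_0$ be the least length $h'$ such that minimal transformers of length $h'$ occur in members of $\mathscr C$ with arbitrarily large order. By minimality, for each $h'<h_0$ there is a bound $N_{h'}$ on the orders of minimal transformers of length $h'$ in $\mathscr C$. Set $n_0$ to exceed $\max_{h'<h_0}N_{h'}+2$.

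Now fix $n\ge n_0$.
\begin{itemize}
\item \emph{Finding the transformer.} Any subgrid of a minimal transformer is again one of the same length: restricting every mesh to $I'\times J'$ preserves regularity, homogeneity, conduction and distinctness, and the vertical and horizontal guards can be restricted as well. So we may pick a structure $\mathbf M$ containing a minimal transformer of length $h_0$ and order exactly $n$.
\item \emph{Tightness.} Since $n-2\ge n_0-2$ exceeds every $N_{h'}$ with $h'<h_0$, no $\mathbf M\in\mathscr C$ contains a minimal transformer of length $h'<h_0$ and order at least $n-2$. Hence tightness holds automatically once we have a full transformer of length $h_0$ and order $n$.
\end{itemize}

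The remaining step, and the main obstacle, is to turn this minimal transformer into a full one. Let $\alpha$ be a vertical guard of $\mu_1$ and put $\mu_0(i,j)=\alpha(i)$, a repeated line with $(\mu_0,\mu_1)$ conducting. Symmetrically, a horizontal guard $\beta$ of $\mu_h$ gives $\mu_{h+1}(i,j)=\beta(j)$.

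The difficulty is that the pairs $(\mu_0,\mu_t)$ for $t\ge2$ are not required to be regular in the definition of a full transformer, so nothing more needs checking there. What does need checking is that $\mu_0$ is \emph{regular} as a map, i.e. that $\atp(\alpha(i),\alpha(i'))$ depends only on $\otp(i,i')$. This need not hold a priori. To fix it, I would apply Ramsey's theorem on the index set $[N]$ with $N\gg n$, colouring pairs $i<i'$ by $\atp(\alpha(i),\alpha(i'))$, and likewise for $\beta$. Passing to a homogeneous subset of size $n$ in both coordinates and restricting all meshes keeps the transformer minimal, as noted above.

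This forces us to start from a transformer of order $N$ rather than $n$. That is allowed, because lengths-$h_0$ transformers exist in every order, and after restriction the order is again exactly $n$. Also, $\mu_0$ must be a genuine repeated line and not constant; this holds since $\alpha$ distinguishes coordinates through a non-constant $F$, and a constant $\alpha$ would make $(\mu_0,\mu_1)$ homogeneous.
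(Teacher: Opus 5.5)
Your skeleton is the paper's: take the least length with transformers of unbounded order, choose $n_0$ so that shorter minimal transformers of order $\ge n_0-2$ do not occur, note that tightness is then automatic, and read the converse off \zcref{thm:trsf}. Your Ramsey step goes beyond the paper. The paper's two-line proof builds $\mu_0(i,j)=\alpha(i)$ without checking that this map is regular. That check is needed: ``repeated line'' presupposes regularity, and \zcref{lem:Hstar} applies \zcref{lem:2tour} to $(\mu_0,\mu_1)$. Defining $h_0$ via ``arbitrarily large order'' also gives you the bounds $N_{h'}$ directly.

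There is, however, a false claim that both of your restrictions rely on. It is not true that every subgrid of a minimal transformer is a minimal transformer of the same length. Restricting guards only gives the ``if'' halves of the conditions ``$\mu_s$ is vertical iff $s=1$'' and ``$\mu_s$ is horizontal iff $s=h$''. The ``only if'' halves can fail, because a mesh that is not vertical or horizontal in $\mathbf M$ may become so after deleting a few rows and columns. This is exactly the phenomenon the paper uses to motivate tightness. For example, in \zcref{lem:tour} take a mesh $f$ defining a lexicographically ordered transitive tournament, with $\mathbf M$ consisting of $\im f$ alone. Then $f$ is not vertical, since a guard value in $\im f$ would put equality into the atomic type and collapse a whole row. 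Yet $\mathring f$ is vertical, with guard $i\mapsto f(i,1)$. So after restriction, some $\mu_s$ with $1<s$ could acquire a vertical guard, or some $\mu_s$ with $s<h_0$ a horizontal one.

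You can repair this with what you already have. Let $n\ge n_0$, restrict a length-$h_0$ minimal transformer to an $n\times n$ subgrid, and suppose the result is not minimal. Choose a vertical index $s$ and a horizontal index $t\ge s$ with $t-s$ minimal; this exists because the first mesh stays vertical and the last stays horizontal. By the minimality of $t-s$, the restricted block $(\mu_s,\dots,\mu_t)$ is a minimal transformer. Its length $t-s+1$ is less than $h_0$, since otherwise $(s,t)=(1,h_0)$ and the restriction would already be minimal. Its order is $n>N_{t-s+1}$, which contradicts the definition of $N_{t-s+1}$. Hence for $n>\max_{h'<h_0}N_{h'}$, restriction does preserve minimality and length. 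With this, both your passage to order exactly $n$ and your passage to the Ramsey-homogeneous $n\times n$ subgrid are justified.

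A minor point: you only know that length-$h_0$ transformers exist in arbitrarily large orders, not in every order. This is enough to start from some order $\ge N$ and restrict as above.
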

\begin{proof}
	Assume $\mathscr C$  is monadically independent. According to \zcref{thm:trsf}, there exists some integer $h$ such that for every integer $n$ the class $\mathscr C$ contains a minimal transformer of length $h$ and order $n$. Choose $h$ minimal with this property
    and let $n_0$ be minimum such that no transformer of length $h'<h$ and order at least $n_0-2$ exists  in some structure in $\mathscr C$.
	Then, the full transformers defined from minimal transformers of length $h$ and order $n\ge n_0$ is automatically tight.
	
The converse implication is a straightforward consequence of \zcref{thm:trsf}, as every full transformer defines a minimal transformer.
\end{proof}

\section{Moving from one \texorpdfstring{$\{K_{k,k},S_{k,k}\}$}{\ifpdfstringunicode{\{K\unichar{"2096}\unichar{"2096},S\unichar{"2096}\unichar{"2096}\}}{\{Kkk,Skk\}}}-free relation to another}
\label{sec:main}
\subsection{Full transformers in oriented \texorpdfstring{$\{K_{k,k},S_{k,k}\}$}{\ifpdfstringunicode{\{K\unichar{"2096}\unichar{"2096},S\unichar{"2096}\unichar{"2096}\}}{\{Kkk,Skk\}}}-free graphs}

Transformers can be quite complex even in binary relational structures. This complexity can be strongly reduced by forbidding some local patterns.
The exclusion of complete bipartite graphs as subgraphs of the Gaifman graphs would reduce the study to nowhere dense structures. Here we shall forbid a bit less, allowing arbitrarily large cliques, by excluding only  an induced biclique and an induced complete split graph.

\begin{lemma}
\label{lem:2tour}
Let $(f,g)$ be a regular pair of non-constant regular maps from $[n]\times [n]$ to $M$
and let $R\in\sigma$ be an antisymmetric relation of $\mathbf M$.
Assume that $\mathbf M^R\in\obk$,  where $n\geq 2k$, and that either $f=g$ or $\im f\cap\im g=\emptyset$.

If  $\strans{f}{g}\cap \mathscr O^+ \neq \emptyset$,
then $f$ and $g$ define tournaments in  $\mathbf M^R$.
\end{lemma}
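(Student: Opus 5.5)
Throughout, being a tournament is meant in $\mathbf M^R$, and an edge means an edge of $\Gaif(\mathbf M^R)$, so only $R$ matters. The plan is to argue by contradiction: if $f$ (or $g$) fails to define a tournament, we use the off-diagonal type $\tau\in\strans{f}{g}\cap\mathscr O^+$ together with \zcref{lem:diag} to build an induced $K_{k,k}$ or an induced $S_{k,k}$ in $\mathbf M^R$. This is impossible since $\mathbf M^R\in\obk$.

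First, fix $\tau\in\strans{f}{g}\cap\mathscr O^+$. Swapping the roles of $f$ and $g$ if necessary (the statement is symmetric), we may assume $\tau\in\trans fg$, so $\mathbf M\models R(f(\bar a),g(\bar b))$ whenever $\otp(\bar a,\bar b)=\tau$. By regularity, whether $f(\bar a)$ and $f(\bar a')$ are equal, $R$-adjacent or non-adjacent depends only on $\sigma=\otp(\bar a,\bar a')$; the same holds for $g$.

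Next, suppose $f$ does not define a tournament. Then there is some $\sigma\notin\trans[=]ff$ with $\sigma\notin\strans ff$: distinct images, no arc in either direction. We record two distinctness facts.
\begin{itemize}
\item Since $f$ is non-constant, \zcref{fact:triv} shows that $\trans[=]ff$ contains no type of $\mathscr O^+$. The same holds for $g$.
\item Every $f(\bar a)$ differs from every $g(\bar b)$ when $\otp(\bar a,\bar b)=\tau$. If $\im f\cap\im g=\emptyset$ this is immediate. If $f=g$, it follows from the first fact, because $\tau\in\mathscr O^+$.
\end{itemize}

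We then choose $\tau_2$ for $g$. If $g$ does not define a tournament either, let $\tau_2$ be a type with $g$-images distinct and non-adjacent, as for $f$. Otherwise take any $\tau_2\in\mathscr O^+$; its images are distinct by the first fact and adjacent because $g$ is a tournament.

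Now apply \zcref{lem:diag} with $\tau_1=\sigma$, $\tau$ and $\tau_2$ and with $n$ replaced by $k$. This gives $\bar a_1,\dots,\bar a_k,\bar b_1,\dots,\bar b_k\in[2k]\times[2k]\subseteq[n]\times[n]$, using $n\ge 2k$. The vertices $f(\bar a_1),\dots,f(\bar a_k)$ are pairwise distinct and form an independent set, since both $\sigma$ and $-\sigma$ give non-adjacent pairs. Every $f(\bar a_i)$ is distinct from and adjacent to every $g(\bar b_j)$. The vertices $g(\bar b_j)$ are pairwise distinct and form either an independent set or a clique.
\begin{itemize}
\item In the first case the $2k$ vertices induce $K_{k,k}$.
\item In the second case they induce $S_{k,k}$, namely a clique of size $k$ completely joined to an independent set of size $k$.
\end{itemize}
Both contradict $\mathbf M^R\in\obk$, so $f$ defines a tournament. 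For $g$ we run the same argument with the roles exchanged, using $\tau_1\in\mathscr O^+$ (now known to be adjacent for $f$) and the bad type for $g$ as $\tau_2$. This yields an induced $S_{k,k}$, again a contradiction.

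The main delicate point is the distinctness bookkeeping. We need the constructed vertices to be genuinely $2k$ distinct elements, so that the induced subgraph really is $K_{k,k}$ or $S_{k,k}$. This is where the non-constancy of $f$ and $g$, \zcref{fact:triv}, and the hypothesis ``$f=g$ or $\im f\cap\im g=\emptyset$'' are used. The case analysis on whether $g$ is a tournament is routine once this is settled.
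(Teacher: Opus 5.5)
Your proof is correct and follows essentially the paper's argument: you apply \zcref{lem:diag} with the cross type $\tau\in\mathscr O^+$ to build an induced $K_{k,k}$ or $S_{k,k}$, using non-constancy and the hypothesis on $\im f,\im g$ to guarantee distinctness. The paper avoids your case split by taking $\tau_1$ to be any type outside $\trans[=]{f}{f}$, which by regularity already makes the $f(\bar a_i)$ a clique or an independent set, and it then concludes for the other map by symmetry; two small remarks on your write-up are that, since $\strans{f}{f}=\trans{f}{f}$, the bad type should be written as $\sigma,-\sigma\notin\trans{f}{f}$ (as your gloss ``no arc in either direction'' intends), and that your explicit normalization $\tau\in\trans{f}{g}$ handles a direction detail that the paper's proof glosses over.
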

\begin{proof}
Let $E(x,y)\is R(x,y)\vee R(y,x)$
and   $\tau\in\strans{f}{g}\cap \mathscr O^+$.  
As $f$ is non-constant, there exists $\tau_1\in\mathscr O\setminus \trans[=]ff$. 
Assume for contradiction that $g$ is not a tournament. 
Then, there exists $\tau_2\in\mathscr O$ such that neither $\tau_2$ nor $-\tau_2$ belongs to $\trans{g}{g}\cup \trans[=]gg$.

According to \zcref{lem:diag} there exist $\bar a_1,\dots,\bar a_k,\bar b_1,\dots,\bar b_k\in [2k]\times[2k]$ such that $\otp(\bar a_i,\bar a_j)=\tau_1$ for all $1\leq i<j\leq k$, $\otp(\bar a_i,\bar b_j)=\tau$ for all $i,j\in[k]$, and $\otp(\bar b_i,\bar b_j)=\tau_2$ for all $1\leq i<j\leq k$.
(Note that  $f(\bar a_1),\dots,f(\bar a_k),g(\bar b_1),\dots,g(\bar b_k)$ are all distinct since $\tau_1\notin\strans[=]{f}{f},\tau\notin\strans[=]{f}{g}$ when $f=g$ according to \zcref{fact:triv}, and $\tau_2\notin\strans[=]{g}{g}$.)
As $\tau\in\strans{f}{g}$, we have $E(f(\bar a_i),g(\bar b_j))$ for all $1\leq i,j\leq 2k$.
As neither $\tau_2$ nor $-\tau_2$ belongs to $\trans{g}{g}$, we have
$\neg E(g(\bar b_i),g(\bar b_j))$ for all $1\leq i<j\leq 2k$.
In particular, as $f(\bar a_1),\dots,f(\bar a_k)$ either form an independent set or induce a transitive tournament in $\mathbf M^R$, 
and as $g(\bar b_1),\dots,g(\bar b_k)$ are independent in $\mathbf M^R$, 
$f(\bar a_1),\dots,f(\bar a_k),g(\bar b_1),\dots,g(\bar b_k)$  induce a $K_{k,k}$ or
an $S_{k,k}$ in $\Gaif(\mathbf M^R)$, contradicting our assumption. 

Thus, $g$ defines a tournament in $\mathbf M^R$ and, by symmetry, so does $f$.
\end{proof}

\begin{lemma}
	\label{lem:tour}
	Let $f:[n]\times[n]\rightarrow M$ be a regular mesh.
	If $f$ defines a tournament in  $\mathbf M^R$, then 
	$\mathring{f}$ is vertical or horizontal in $\mathbf M^R$, and it is both horizontal and vertical in $\mathbf M^R$ if the tournament is not transitive.
\end{lemma}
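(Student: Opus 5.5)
The plan is to use the rows and columns at the boundary of the mesh as guards, and then to reduce everything to a short case analysis on the set $W\is\trans{f}{f}$. The steps are: pin down the shape of $W$, define candidate guards from boundary rows and columns, show at least one guard is non-constant, and finally show that failure of either guard forces $R$ to be a lexicographic order.

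\textbf{Shape of $W$.} Since $f$ is a regular mesh defining a tournament and $R$ is antisymmetric and loopless, for every $\tau\in\mathscr O\setminus\{(0,0)\}$ exactly one of $\tau,-\tau$ belongs to $W$. So $W$ is determined by four binary choices, one for each of the pairs $\pm(1,1)$, $\pm(1,-1)$, $\pm(1,0)$, $\pm(0,1)$. Moreover, for distinct elements the atomic type of a pair in $\mathbf M^R$ is determined by whether $R$ holds in the forward direction.

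\textbf{Candidate guards.} For the vertical guard, set $\alpha(i)\is f(i,1)$ for $i\in\mathring{I}$, using the first column, which lies outside $\mathring{f}$. For $i,i'\in\mathring I$ and $j'\in\mathring J$ we have $\otp((i,1),(i',j'))=(\sgn(i'-i),1)$, and $f(i,1)\neq f(i',j')$ by injectivity. Hence $\atp(\alpha(i),\mathring f(i',j'))$ depends only on $\sgn(i'-i)$. It is non-constant unless the three order types $(-1,1),(0,1),(1,1)$ are either all in $W$ or all outside $W$. Symmetrically, for the horizontal guard set $\beta(j)\is f(1,j)$. This guard is non-constant unless $(1,-1),(1,0),(1,1)$ are either all in $W$ or all outside $W$. (If convenient, one can instead use the last row or column; this just replaces $W$ by $-W$ in the conditions.)

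\textbf{At least one guard works.} Suppose the vertical guard fails. If $(-1,1),(0,1),(1,1)\in W$, then by antisymmetry $(1,-1)\notin W$ while $(1,1)\in W$. So the horizontal triple is mixed, and $\mathring f$ is horizontal. If instead none of $(-1,1),(0,1),(1,1)$ is in $W$, then $(1,-1)\in W$ and $(1,1)\notin W$, and again $\mathring f$ is horizontal. Hence $\mathring f$ is always vertical or horizontal.

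\textbf{Non-transitive case.} For the second part I would prove the contrapositive: if $\mathring f$ is not both vertical and horizontal, then the tournament is transitive. Say the vertical guard fails with all $(x,1)\in W$ (the other cases are symmetric under reversal of $R$ and transposition of coordinates). Then $W=\{(-1,1),(0,1),(1,1),\epsilon(1,0)\}$ for some sign $\epsilon$. Using the transitivity criterion on regular maps stated before this lemma, one checks that $\otp(\tau+\tau')\in W$ for all $\tau,\tau'\in W$. Indeed, $W$ is exactly a lexicographic order, with the second coordinate dominant and ties broken by the first coordinate in direction $\epsilon$.

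The main obstacle is bookkeeping rather than ideas: keeping the sign convention $\otp(\bar a,\bar b)=\sgn(\bar b-\bar a)$ straight, and confirming that the guards avoid $\im\mathring f$. The latter is exactly why one passes to $\mathring f$.
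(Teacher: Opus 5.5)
Your proof is correct and follows essentially the same route as the paper: the same guards (the first column $f(i,1)$ and the first row $f(1,j)$) and a finite case analysis of $\trans{f}{f}$. The difference is only organizational. The paper first normalizes $\trans{f}{f}$ to contain $(0,1),(1,0),(1,-1)$ and then splits on transitivity. You instead characterize failure of a guard as a uniform triple of order types and show that such a failure forces $\trans{f}{f}$ to be a lexicographic order. This gives the first part directly, via the shared type $(1,1)$ and the antipodal pair $(-1,1),(1,-1)$, and the second part by contraposition.
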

\begin{proof}
	By reversing the order of the index sets and exchanging the coordinates if necessary, we can assume that $\trans{f}{f}$
	contains $(0,1),(1,0)$, and $(1,-1)$.
	
	Assume $f$ defines a transitive tournament in $\mathbf M^R$.
	Then, the transitivity of $R$ in $\im f$ implies 
	\[\trans{f}{f}\setminus\{(0,0)\}=\{(1,0),(1,-1),(1,1),(0,1)\}.\]
	In other words, if $(i,j)\neq (i',j')$, then 
	$\mathbf M\models R(f(i,j),f(i',j'))$ if and only if $(i,j)$ precedes $(i',j')$ in the lexicographic order on $[n]\times[n]$.
	Thus, $R(f(i,1),f(i',j'))\leftrightarrow (i\leq i')$ 
	(and $R(f(i',j'),f(i,1))\leftrightarrow (i> i')$)
	whenever $j'>1$. Hence, $\mathring{f}$ is vertical. 
	
	Assume $f$ defines a non-transitive tournament in $\mathbf M^R$.
	Then, $(-1,-1)\in\trans{f}{f}$. 
	Then, for $i,j,i',j'>1$, $R(f(i,1),f(i',j'))$ holds exactly if $i=i'$ and
	$R(f(1,j),f(i',j'))$ holds exactly if $j\leq j'$, hence $\mathring{f}$ is both horizontal and vertical. 
\end{proof}

\begin{lemma}
	\label{lem:HV}
	Let $(f,g)$ be a regular pair of regular $[n]\times[n]$-meshes of a $\sigma$-structure $\mathbf M$ and let $R\in\sigma$ be an antisymmetric relation in $\mathbf M$.
	
	If $\strans{f}{g}\cap \mathscr O^+=\emptyset$ and $(0,1)\in \strans{f}{g}$,
	then $\mathring{f}$ and $\mathring{g}$  are both vertical in $\mathbf M^R$.
\end{lemma}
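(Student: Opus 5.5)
We take the guards of $\mathring f$ and $\mathring g$ to be the first column of $f$ and the last column of $g$. Both are explicit, so the only task is to check that the atomic type of a guard element with a mesh element depends only on the order type of the row indices and is not constant. Regularity of $(f,g)$ reduces this to reading off $\strans{f}{g}$.

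By definition $\strans{f}{g}=\trans{f}{g}\cup\trans{g}{f}$, and the statement is symmetric in $f$ and $g$. We may therefore assume, after exchanging $f$ and $g$ if needed, that $(0,1)\in\trans{f}{g}$. This means $\mathbf M\models R(f(i,j),g(i,j'))$ whenever $j<j'$. The hypothesis $\strans{f}{g}\cap\mathscr O^+=\emptyset$ says that for any $\bar a,\bar b$ with $\otp(\bar a,\bar b)\in\mathscr O^+$ there is no $R$-arc between $f(\bar a)$ and $g(\bar b)$, in either direction. Note also that $\mathring f$ and $\mathring g$ are regular, being restrictions of regular meshes.

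\emph{Verticality of $\mathring g$.} Define the vertical guard $\alpha(i)\is f(i,1)$ for $i\in\mathring{[n]}$. Take $i\in\mathring{[n]}$ and $(i',j')\in\mathring{[n]}\times\mathring{[n]}$. Then $j'>1$, so
\[\otp((i,1),(i',j'))=(\sgn(i'-i),1).\]
This depends only on $\otp(i,i')$. By regularity of $(f,g)$ there is a map $\zeta_{f,g}$ with $\atp(f(\bar a),g(\bar b))=\zeta_{f,g}(\otp(\bar a,\bar b))$, in $\mathbf M$ and hence in $\mathbf M^R$; this includes equality. So $\atp(\alpha(i),\mathring g(i',j'))=F(\otp(i,i'))$ for some map $F$. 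To see that $F$ is non-constant:
\begin{itemize}
\item if $i'=i$, the order type is $(0,1)\in\trans{f}{g}$, so $R(\alpha(i),g(i',j'))$ holds;
\item if $i'\neq i$, the order type lies in $\mathscr O^+$, so $R(\alpha(i),g(i',j'))$ fails.
\end{itemize}
Hence $\mathring g$ is vertical in $\mathbf M^R$.

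\emph{Verticality of $\mathring f$.} Symmetrically, define $\alpha'(i)\is g(i,n)$. For $(i',j')\in\mathring{[n]}\times\mathring{[n]}$ we have $j'<n$, so
\[\otp((i',j'),(i,n))=(\sgn(i-i'),1),\]
again a function of $\otp(i,i')$ alone. Then $R(f(i',j'),\alpha'(i))$ holds exactly when $i=i'$, using $(0,1)\in\trans fg$ and the absence of $\mathscr O^+$ types. So $\atp(\alpha'(i),\mathring f(i',j'))$ is a non-constant function of $\otp(i,i')$, and $\mathring f$ is vertical in $\mathbf M^R$.

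The main point needing care is that, in both cases, the second coordinate of the order type is frozen at $1$. This is exactly why we pass to $\mathring f$ and $\mathring g$: it excludes $j'=1$ and $j'=n$. It is also what makes the absence of $\mathscr O^+$ types kill every arc with $i\neq i'$. Everything else follows directly from the definitions.
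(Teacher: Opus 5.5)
Your proof is correct and is essentially the paper's argument. The paper also reduces to $(0,1)\in\trans{f}{g}$, takes $f(i,1)$ as the vertical guard for $\mathring{g}$ and $g(i,n)$ as the vertical guard for $\mathring{f}$, and observes that the $R$-arc is present exactly when $i=i'$. The paper additionally records the absence of reverse arcs, which you rightly skip: one differing atomic formula already makes $F$ non-constant.
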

\begin{proof}
	By exchanging $f$ and $g$ and reversing the order of an index set  if necessary, we can assume that $(0,1)\in\trans{f}{g}$ (hence $(0,-1)\notin\trans{g}{f}$).
	Then, for every $j'>1$ we have
	\[
	\neg R(g(i',j'),f(i,1))\qquad\text{and}\qquad R(f(i,1),g(i',j'))\leftrightarrow (i=i'),
	\]
	and for every $j'<n$ we have
	\[
	\neg R(g(i,n),f(i',j'))\qquad\text{and}\qquad R(f(i',j'),g(i,n))\leftrightarrow (i=i').
	\]
	Hence, $\mathring{f}$ and $\mathring{g}$  are both vertical in $\mathbf M^R$.
\end{proof}

\begin{lemma}
\label{lem:IHVK}
Let $f:[n]\times[n]\rightarrow M$ be a regular mesh
and let $R\in\sigma$ be an antisymmetric relation of $\mathbf M$.
Assume that $\mathbf M^R\in\obk$,  where $n\geq 2k$,
    
	
	Then, $\Gaif(\mathbf M^R)[\im f]$ is 
	either $n^2K_1$ (i.e. $\im f$ is  independent in $\mathbf M^R$),  $K_{n^2}$, $nK_n$,  or $L(K_{n,n})$. In the three last cases, $\omega(\Gaif(\mathbf M^R)[\im f])\geq n$ and  $\mathring{f}$ is horizontal or vertical in $\mathbf M^R$.
\end{lemma}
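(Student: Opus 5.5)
Since $f$ is a regular mesh, the adjacency in $\Gaif(\mathbf M^R)$ between $f(\bar a)$ and $f(\bar b)$ depends only on $\otp(\bar a,\bar b)$. Let $T=\strans{f}{f}\setminus\{(0,0)\}$, which is symmetric under $\tau\mapsto-\tau$. The proof is a case analysis on $T$. Modulo negation, the nonzero order types split into four classes: $\{\pm(0,1)\}$ (same first coordinate), $\{\pm(1,0)\}$ (same second coordinate), $\{\pm(1,1)\}$ and $\{\pm(1,-1)\}$.

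\emph{Case 1: $T\cap\mathscr O^+\neq\emptyset$.} I apply \zcref{lem:2tour} with $g=f$; this is allowed since $f$ is non-constant and $n\ge 2k$. Hence $f$ defines a tournament in $\mathbf M^R$, so the Gaifman graph on $\im f$ is $K_{n^2}$. By \zcref{lem:tour}, $\mathring f$ is vertical or horizontal, and clearly $\omega\ge n^2\ge n$.

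\emph{Case 2: $T\cap\mathscr O^+=\emptyset$.} Then $T\subseteq\{\pm(0,1),\pm(1,0)\}$, which gives four subcases.
\begin{itemize}
\item If $T=\emptyset$, then $\im f$ is independent, which is the $n^2K_1$ case.
\item If $T=\{\pm(0,1)\}$, two points are adjacent iff they share a first coordinate, so the graph is $nK_n$.
\item If $T=\{\pm(1,0)\}$, the graph is again $nK_n$, with the roles of the coordinates exchanged.
\item If $T$ contains both classes, points are adjacent iff they share a row or a column, so the graph is $L(K_{n,n})$.
\end{itemize}
In each of the last three subcases a row or column gives a clique of size $n$. For horizontality or verticality I apply \zcref{lem:HV} with $g=f$. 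When $(0,1)\in T$, it gives that $\mathring f$ is vertical. When only $(1,0)\in T$, I apply the same lemma after swapping the two coordinates of the index set; this gives that $\mathring f$ is horizontal, since swapping coordinates exchanges vertical and horizontal.

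The main point needing care is Case 1: one must check that \zcref{lem:2tour} applies with $f=g$, and that the tournament case really yields $K_{n^2}$ rather than some other graph. Both follow directly. The one subtlety is confirming that \zcref{lem:HV} tolerates $f=g$ and the coordinate swap. Its statement requires only a regular pair of regular meshes, and $(f,f)$ is such a pair; the swap is a symmetric relabelling of the index set.
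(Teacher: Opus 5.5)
Your proof is correct and follows the paper's argument: you use \zcref{lem:2tour} (with $g=f$) and \zcref{lem:tour} when an order type of $\mathscr O^+$ occurs, and otherwise enumerate the Gaifman graph directly and apply \zcref{lem:HV}. One small slip: $\strans{f}{f}=\trans{f}{f}$ is not symmetric under negation (for antisymmetric $R$ it never contains both $\tau$ and $-\tau$), so the adjacency set you mean is $\trans{f}{f}\cup(-\trans{f}{f})$; when it is $(0,-1)$ rather than $(0,1)$ that lies in $\trans{f}{f}$, you must first reverse the second index set, not just swap coordinates, before invoking \zcref{lem:HV}, as the paper does.
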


\begin{proof}
	Assume $\Gaif(\mathbf M^R)[\im f]$ is  not  an independent set.
	If $\trans{f}{f}\cap\mathscr O^+\neq\emptyset$, then 
    the result follows from \zcref{lem:2tour,lem:tour}. Otherwise, if $\trans{f}{f}$ contains $(0,1)$ or $(0,-1)$ (resp. $(1,0)$ or $(-1,0)$) , then 
    $\mathring{f}$ is vertical (resp. horizontal) according to \zcref{lem:HV} (after possible reversal of the index set and/or transposition of $f$). The structure of the Gaifman graph  $\Gaif(\mathbf M^R)[\im f]$ is clear  when $\trans{f}{f}\cap\mathscr O^+=\emptyset$: it is either $nK_n$ if $\trans{f}{f}$ does not contain both an order type in  $\{(0,1),(0,-1)\}$ and an order type in  $\{(1,0),(-1,0)\}$, and $L(K_{n,n})$, otherwise.
\end{proof}

\begin{lemma}
\label{lem:Hstar}
    Let $(\mu_0,\dots,\mu_{h+1})$ be a  full transformer of $\mathbf M$ with order $n\geq 2k$. 
    Assume $R$ is antisymmetric on $\mathbf M$ and $\mathbf M^R\in\obk$.
    \begin{itemize}
        \item If $\im\mu_1$ is independent in $\mathbf M^R$, then 
        \[\text{either}\quad\strans{\mu_0}{\mu_1}=\emptyset\quad\text{or}\quad\strans{\mu_0}{\mu_1}=\{0\}\times\{-1,0,1\};\] 
        \item if $\im\mu_h$ is independent in $\mathbf M^R$, then 
        \[\text{either}\quad\strans{\mu_h}{\mu_{h+1}}=\emptyset\quad\text{or}\quad\strans{\mu_h}{\mu_{h+1}}=\{-1,0,1\}\times\{0\}.\]
    \end{itemize}
    \end{lemma}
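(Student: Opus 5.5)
The plan is to use the fact that $\mu_0$ does not depend on its second coordinate. This forces $\strans{\mu_0}{\mu_1}$ to be a union of full ``rows'' $\{s\}\times\{-1,0,1\}$. The rows with $s=\pm1$ will then be ruled out because they produce an induced $K_{k,k}$ or $S_{k,k}$ in $\Gaif(\mathbf M^R)$.

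\emph{Step 1: row structure.} Write $\mu_0(i,j)=\alpha(i)$. Suppose $(s,t)\in\trans{\mu_0}{\mu_1}$, and let $t'\in\{-1,0,1\}$. Pick $\bar a=(i,j)$ and $\bar b=(i',j')$ with $\otp(\bar a,\bar b)=(s,t)$, taking $j'$ in the interior of $[n]$; this is possible since $n\ge 2k\ge 3$. Then choose $j''$ with $\sgn(j'-j'')=t'$. Since $\mu_0(i,j'')=\mu_0(i,j)$, regularity gives $(s,t')\in\trans{\mu_0}{\mu_1}$. The same argument applies to $\trans{\mu_1}{\mu_0}$. Hence $\strans{\mu_0}{\mu_1}=S\times\{-1,0,1\}$ for some $S\subseteq\{-1,0,1\}$, and it remains to show $S\subseteq\{0\}$.

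\emph{Step 2: disjointness and structure of $\mu_0$.} First, $\im\mu_0\cap\im\mu_1=\emptyset$. Otherwise some order type would lie in $\trans[=]{\mu_0}{\mu_1}$. By regularity and the row structure of Step 1, one element $\alpha(i)$ would then equal $\mu_1(\bar b)$ for several distinct $\bar b$, contradicting the injectivity of $\mu_1$. Next, $\mu_0$ is regular and $\alpha(i)$, $\alpha(i')$ are distinct for $i\neq i'$. So for $i<i'$ the atomic type of $(\alpha(i),\alpha(i'))$ is constant. Consequently $\{\alpha(1),\dots,\alpha(k)\}$ is either independent or a clique in $\Gaif(\mathbf M^R)$.

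\emph{Step 3: excluding $s=\pm1$.} Suppose $1\in S$. Then $R(\alpha(i),\mu_1(i',j'))$ or $R(\mu_1(i',j'),\alpha(i))$ holds, with a fixed direction, whenever $i<i'$. Take the vertices $\alpha(1),\dots,\alpha(k)$ on one side, and $\mu_1(n,1),\dots,\mu_1(n,k)$ on the other; here $n\ge 2k>k$, so all these vertices are distinct. Every cross pair is adjacent in $\Gaif(\mathbf M^R)$. The second side is independent, because $\im\mu_1$ is. By Step 2 the first side is independent or a clique. So these $2k$ vertices induce $K_{k,k}$ or $S_{k,k}$, contradicting $\mathbf M^R\in\obk$. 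The case $-1\in S$ is symmetric: use $\alpha(n-k+1),\dots,\alpha(n)$ against $\mu_1(1,1),\dots,\mu_1(1,k)$. Thus $S\subseteq\{0\}$, which gives exactly the two alternatives of the first item.

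The second item follows by the same argument after transposing coordinates, with $\mu_{h+1}$ a repeated column playing the role of $\mu_0$. The only delicate point is Step 1's boundary handling, i.e.\ choosing interior indices so that every sign $t'$ is realizable; this is where $n\ge 3$ is needed.
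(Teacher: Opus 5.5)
Your proof is correct and follows the paper's route. The paper also observes that $\strans{\mu_0}{\mu_1}=T\times\{-1,0,1\}$ because $\mu_0$ is a repeated line, and then excludes $\pm1\in T$ by citing \zcref{lem:2tour}. That lemma's proof is exactly your $K_{k,k}$/$S_{k,k}$ construction, so you have inlined it; you also verify the hypothesis $\im\mu_0\cap\im\mu_1=\emptyset$, which the paper uses implicitly when it applies \zcref{lem:2tour}.

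There are two small slips, neither of which affects correctness. First, with the paper's definition $\strans{f}{g}=\trans{f}{g}\cup\trans{g}{f}$ (no sign change on the second term), a membership $(1,t)\in\trans{\mu_1}{\mu_0}$ gives adjacency when the $\alpha$-index is the larger one. That is your ``$-1$'' configuration, which you also treat. Second, the claim $n\ge 2k\ge 3$ fails for $k=1$; but then $\mathbf M^R$ has no arcs and the lemma is trivial.
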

\begin{proof}
    As $\mu_0$ is a repeated line and $(\mu_0,\mu_1)$ is regular, 
    $\strans{\mu_0}{\mu_1}=T\times \{-1,0,1\}$ for some $T\subseteq\{-1,0,1\}$.
    As $\im\mu_1$ is independent in $\mathbf M^R$, according to \zcref{lem:2tour}, $\strans{\mu_0}{\mu_1}\cap\mathscr O^+=\emptyset$.
    Thus, $\strans{\mu_0}{\mu_1}\subseteq\{0\}\times \{-1,0,1\}$.

    The second item is deduced by considering the transpose of the full transformer.
\end{proof}

\subsection{From one relation to another}

 In this section, we consider $\sigma$-structures $\mathbf M$ such that 
$\mathbf M^R$ and $\mathbf M^{R'}$
 belong to $\obk$. This section is devoted to the proof of the technical \zcref{lem:main}, which allows to deduce the monadic dependence of an expansion of a monadically dependent class of binary structures, when the introduced relation $R$  is not too far from an existing relation $R'$ with Gaifman graph in $\mathscr B_k$. The ``not too far'' condition will be formalized using the next definition.

\begin{definition}[See \zcref{fig:pi}]
Let $R,R'\in\sigma$. A $\sigma$-structure
    $\mathbf M$ has the $\Pi^k_{R,R'}$  property if, for every subset $X$ of vertices of $\mathbf M$, 
\begin{itemize}
    \item if $\Gaif(\mathbf M^R)[X]$ is $K_k$, then 
    $\Gaif(\mathbf M^{R'})[X]$ is not $k\,K_1$;
    \item if $\Gaif(\mathbf M^R)[X]$ is $K_k\ctimes K_1$ or $M_k$, 
    then 
    $\Gaif(\mathbf M^{R'})[X]$ is neither $K_k\cup k\,K_1$ nor $2k\,K_1$.
\end{itemize}    
\end{definition}

\begin{figure}[ht]
    \centering
    \includegraphics[width=\linewidth]{Pi.pdf}
    \caption{The forbidden modifications from $\mathbf M^R$ to $\mathbf M^{R'}$.}
    \label{fig:pi}
\end{figure}

\begin{lemma}
\label{lem:main}
	Let $R,R'\in\sigma$. Assume  
	that $R$ and $R'$ are both antisymmetric in $\mathbf M$, that 
	$\mathbf M^R$ and $\mathbf M^{R'}$ belong to $\obk$, and that $\mathbf M$ has the property $\Pi^k_{R,R'}$.
	
	If $\mathbf M$ contains a tight full transformer $(\mu_0,\dots,\mu_{h+1})$ of order $n\geq 2k$, then
	$\mathbf M^{\sigma\setminus\{R\}}$ or $\mathbf M^{\{R,R'\}}$ contains a transformer of length at most $h$ and order at least $n-2$.
\end{lemma}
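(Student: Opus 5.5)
We start from the tight full transformer $(\mu_0,\dots,\mu_{h+1})$ of $\mathbf M$ and ask which consecutive pairs $(\mu_s,\mu_{s+1})$, $0\le s\le h$, are conducting because of the relation $R$. If a pair is conducting in $\mathbf M$, then it is conducting either in $\mathbf M^{\sigma\setminus\{R\}}$ or via $R$ alone, since atomic types split relation by relation. If no pair uses $R$ essentially, we can delete $R$: the meshes remain regular, homogeneous pairs remain homogeneous, and conducting pairs remain conducting. Then $(\mu_1,\dots,\mu_h)$, guarded by $\mu_0$ and $\mu_{h+1}$, is already a transformer of $\mathbf M^{\sigma\setminus\{R\}}$ of length at most $h$ and order $n$. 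So we may assume some pair is conducting only through $R$. The aim is then to show that the whole transformer, possibly shortened and restricted to $\mathring\mu$ (losing two rows and two columns, hence order $n-2$), lives in $\mathbf M^{\{R,R'\}}$, or else to find a shorter transformer and contradict tightness.

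The key local tool is the following. Whenever a pair $(\mu_s,\mu_{s+1})$ is conducting via $R$, the set $\strans{\mu_s}{\mu_{s+1}}$ is nonempty and not all of $\mathscr O$.

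\emph{Case 1: it meets $\mathscr O^+$.} By \zcref{lem:2tour}, both meshes define tournaments in $\mathbf M^R$ (using that distinct meshes of a transformer have disjoint images, or can be made so after the standard cleanup). Then \zcref{lem:tour} makes $\mathring\mu_s$ vertical or horizontal in $\mathbf M^R$. If $2\le s\le h-1$, this gives a transformer of length $<h$ and order $n-2$, contradicting tightness. Hence this can only happen at the ends.

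\emph{Case 2: it misses $\mathscr O^+$.} Then \zcref{lem:HV} again produces vertical or horizontal restricted meshes, with the same tightness contradiction.

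So $R$-conduction is confined to the end pairs. There, \zcref{lem:Hstar} and \zcref{lem:IHVK} describe the $R$-structure: $\im\mu_1$ is either independent or induces $K_{n^2}$, $nK_n$ or $L(K_{n,n})$, and similarly at the other end.

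We then use $\Pi^k_{R,R'}$ to transfer this $R$-pattern to $R'$. A large clique in $\Gaif(\mathbf M^R)$ on $\im\mu_1$, or the matching or corona pattern between $\im\mu_0$ and $\im\mu_1$ coming from $\strans{\mu_0}{\mu_1}=\{0\}\times\{-1,0,1\}$, yields, via \zcref{lem:diag} and Ramsey-type extraction of $k$ indices, induced copies of $K_k$, $M_k$ or $K_k\ctimes K_1$ in $\mathbf M^R$. Property $\Pi^k_{R,R'}$ forbids the corresponding $R'$-pattern from being edgeless. Since $\mu_0,\mu_1$ are regular, this means $R'$ connects these same order types, which makes the pair conducting (or the mesh vertical or horizontal) in $\mathbf M^{R'}$ too. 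Consequently:
\begin{itemize}
\item if only $R$ distinguishes inner pairs, we are in $\mathbf M^{\{R,R'\}}$, with $R'$ reproducing the guards;
\item otherwise $R$ is dispensable and we are in $\mathbf M^{\sigma\setminus\{R\}}$.
\end{itemize}

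The main obstacle is this last step. It requires a careful case analysis over the possible sets $\strans{\mu_0}{\mu_1}$ and $\trans{\mu_1}{\mu_1}$, matching each to one of the three forbidden configurations of $\Pi^k_{R,R'}$, and checking in every case that the resulting $R'$-relation is non-homogeneous for the right order types and still yields a valid guard after trimming to $\mathring\mu$. I would organize it by the four possible Gaifman types of $\im\mu_1$ given by \zcref{lem:IHVK}, and handle the $\mu_h,\mu_{h+1}$ end by transposition.
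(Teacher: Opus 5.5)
\textbf{There is a genuine gap.} Your Case~2 fails, and with it your conclusion that $R$-conduction is confined to the end pairs. \zcref{lem:HV} needs some order type with exactly one zero coordinate in $\strans{\mu_s}{\mu_{s+1}}$. The remaining possibility is $\strans{\mu_s}{\mu_{s+1}}=\{(0,0)\}$, where $R$ is the index-preserving perfect matching between $\im\mu_s$ and $\im\mu_{s+1}$. In that case neither mesh becomes vertical or horizontal, so tightness gives no contradiction. Once the $\mathscr O^+$ case and the one-zero-coordinate case are excluded by tightness, this is the \emph{only} way an inner pair can conduct through $R$ when $h\ge 3$. So this case is the real content of the lemma, not a degenerate one.

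What is missing is an argument that such a pair is also conducting in $\mathbf M^{R'}$, or, when $h=2$, that $(\mathring\mu_1,\mathring\mu_2)$ is a transformer of $\mathbf M^{\{R,R'\}}$. The paper argues as follows.
\begin{itemize}
\item By the $K_k$ clause of $\Pi^k_{R,R'}$ and \zcref{lem:IHVK}, a mesh that is independent in $\mathbf M^{R'}$ is also independent in $\mathbf M^{R}$.
\item If neither mesh is independent in $\mathbf M^{R'}$, then both $\mathring\mu_s$ and $\mathring\mu_{s+1}$ are horizontal or vertical in $\mathbf M^{R'}$. Tightness then forces $h=2$, and you get a transformer of length $2$ in $\mathbf M^{\{R,R'\}}$.
\item Otherwise, say $\mu_{s+1}$ is independent in $\mathbf M^{R'}$. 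Suppose the pair were homogeneous in $\mathbf M^{R'}$. Then \zcref{lem:2tour} gives $\strans[R']{\mu_s}{\mu_{s+1}}=\emptyset$.
\item If $\mu_s$ is also $R'$-independent, the $R$-matching $M_k$ becomes $2k\,K_1$ in $R'$, contradicting $\Pi^k_{R,R'}$.
\item If $\mu_s$ is not $R'$-independent, take a row or column $X$ of $\mu_s$ that is an $R'$-clique, and the matched row or column $Y$ of $\mu_{s+1}$. Then $X\cup Y$ induces $K_n\cup n\,K_1$ in $\Gaif(\mathbf M^{R'})$ but $M_n$ or $K_n\ctimes K_1$ in $\Gaif(\mathbf M^{R})$, again contradicting $\Pi^k_{R,R'}$.
\end{itemize}

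Without this transfer, your closing dichotomy does not hold. For $h\ge 3$, some inner pairs may conduct only through $R$, others only through a third relation, and the guard pairs through relations other than $R$ and $R'$. Then $\mathbf M^{\{R,R'\}}$ loses the guards and the third relation, while $\mathbf M^{\sigma\setminus\{R\}}$ loses the $R$-matchings; ``$R'$ reproducing the guards'' does not help here.

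The correct picture is that $R'$ takes over every role of $R$:
\begin{itemize}
\item at the guard pairs, via \zcref{lem:Hstar} together with the $M_k$/$K_k\ctimes K_1$ clause when $\im\mu_1$ is $R$-independent, and via \zcref{lem:IHVK} together with the $K_k$ clause otherwise;
\item at the inner matchings, as above.
\end{itemize}
Consequently, for $h\ge 3$ the output always lies in $\mathbf M^{\sigma\setminus\{R\}}$, and $\mathbf M^{\{R,R'\}}$ only arises when $h\le 2$.

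Note also that $\Pi^k_{R,R'}$ only tells you that the $R'$-pattern is not edgeless. It does not say that $R'$ ``connects the same order types'', so each conclusion has to be extracted from \zcref{lem:2tour}, \zcref{lem:Hstar} and \zcref{lem:IHVK} as above.
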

\begin{proof}
Let $(\mu_1,\dots,\mu_h)$ be the minimal transformer associated to the tight full transformer.

\begin{claim}
\label{cl:1}
    Let $s\in [h]$. If $\mu_s$ is not independent in $\mathbf M^{R}$ then $s\in\{1,h\}$ and $\mu_s$ is not independent in $\mathbf M^{R'}$, thus $\mathring{\mu}_{s}$ is horizontal or vertical in $\mathbf M^{R'}$.
\end{claim}
\begin{clproof}
    According to \zcref{lem:IHVK}, $\mathring{\mu}_{s}$ is horizontal or vertical and the graph $\Gaif(\mathbf M)[\im\mu_s]$ contains a clique of size $n$. By tightness, $s\in\{1,h\}$.
    As 
    $\mathbf M$ has the property $\Pi^k_{R,R'}$, $\Gaif(\mathbf M^{R'})[\im\mu_s]$ is not independent.
    According to \zcref{lem:IHVK}, we deduce that $\mathring{\mu}_{s}$ is horizontal or vertical in $\mathbf M^{R'}$.
\end{clproof}

\begin{claim}
\label{cl:3}
    If $\mu_1$ is independent and vertical in $\mathbf M^R$, then 
    $\mathring{\mu}_{1}$ is horizontal or vertical in $\mathbf M^{R'}$.
\end{claim}
\begin{clproof}
  If $\mu_1$ is not independent in $\mathbf M^{R'}$ the result follows from \zcref{lem:IHVK}.
    Otherwise,
  as $(\mu_0,\mu_1)$ is conducting in $\mathbf M^R$,
    according to \zcref{lem:Hstar} we have 
    $\strans{\mu_0}{\mu_1}=\{0\}\times\{-1,0,1\}$.
    In particular, $\Gaif(\mathbf M^{R})[\im \mu_0\cup\im\mu_1]$ contains either $M_n$ or $K_n\ctimes K_1$ as an induced subgraph.
    If $(\mu_0,\mu_1)$ is not conducting in $\mathbf M^{R'}$ then, 
    according to \zcref{lem:Hstar}, 
    $\strans[R']{\mu_0}{\mu_1}=\emptyset$, contradicting the assumption that $\mathbf M$ has the property $\Pi^k_{R,R'}$. Hence, $(\mu_0,\mu_1)$ is  conducting in $\mathbf M^{R'}$, thus 
    $\mu_1$ (hence $\mathring{\mu}_{1}$) is vertical in $\mathbf M^{R'}$.
\end{clproof}
\begin{claim}
\label{cl:4}
    Assume $h\geq 2$, $s\in[h-1]$, and at least one of $\mu_s$ and $\mu_{s+1}$ is independent in $\mathbf M^R$.
    If $(\mu_s,\mu_{s+1})$ is conducting in $\mathbf M^R$, then either $(\mu_s,\mu_{s+1})$ is conducting in $\mathbf M^{R'}$ or $h=2$ and $(\mathring{\mu}_s,\mathring{\mu}_{s+1})$ is a minimal transformer in $\mathbf M^{\{R,R'\}}$.
\end{claim}
\begin{clproof}
According to \zcref{lem:2tour}, $\strans{\mu_s}{\mu_{s+1}}\cap\mathscr O^+=\emptyset$. 
If $(0,1)\in\strans{\mu_s}{\mu_{s+1}}$, then by \zcref{lem:HV},
$\mathring{\mu}_{s+1}$ is vertical in $\mathbf M^R$, contradicting the tightness of the full transformer.
It follows that $\strans{\mu_s}{\mu_{s+1}}=\{(0,0)\}$ (as $(\mu_s,\mu_{s+1})$ is conducting in $\mathbf M^R$).

If none of $\mu_s$ and $\mu_{s+1}$ is independent in $\mathbf M^{R'}$, then, according to \zcref{lem:IHVK}, $\mathring{\mu}_s$ and $\mathring{\mu}_{s+1}$ are either horizontal or vertical in
$\mathbf M^{R'}$. Hence, one of $(\mathring \mu_1, \ldots, \mathring \mu_s)$, $(\mathring \mu_s, \mathring \mu_{s+1})$ or $(\mathring \mu_{s+1}, \ldots, \mathring \mu_{h})$ is a transformer in $\mathbf M^{\{R,R'\}}$. 
The first and last case contradict the tightness of $(\mu_0, \ldots, \mu_{h+1})$ in $\mathbf M$, while the second case implies $h=2$ (by tightness).

Otherwise, at least one of $\mu_s$ and $\mu_{s+1}$ is independent in $\mathbf M^{R'}$.
Without loss of generality, we can assume that $\mu_{s+1}$ is independent in $\mathbf M^{R'}$. 
According to property $\Pi^k_{R,R'}$ and \zcref{lem:IHVK}, $\mu_{s+1}$ is also independent in $\mathbf M^R$.

Assume for contradiction that $(\mu_s,\mu_{s+1})$ is not conducting in $\mathbf M^{R'}$. 
Then $(\mu_s,\mu_{s+1})$ is homogeneous  in $\mathbf M^{R'}$, thus $\strans[R']{\mu_s}{\mu_{s+1}}$ is either $\mathscr O$ or $\emptyset$.
As the first possibility is ruled out by \zcref{lem:2tour}, we deduce $\strans[R']{\mu_s}{\mu_{s+1}}=\emptyset$.

If $\mu_s$ is independent in $\mathbf M^{R'}$ then, as above, it is independent in $\mathbf M^R$ and, by property $\Pi^k_{R,R'}$, as $\strans[R]{\mu_s}{\mu_{s+1}} = \{(0,0)\}$, we cannot have $\strans[R']{\mu_s}{\mu_{s+1}}=\emptyset$.

Otherwise, $\mu_s$ is not independent in $\mathbf M^{R'}$. Since $\mu_{s+1}$ is independent in $\mathbf M^{R'}$, according to \zcref{lem:IHVK},
 there exists a line or a column $X$ of $\mu_s$ and the matching line or column $Y$ in $\mu_{s+1}$ such that
$\Gaif(\mathbf M^{R'})[X]$ is isomorphic to $K_n$ and
$\Gaif(\mathbf M^{R'})[X\cup Y]$ is isomorphic to
$K_n\cup n\,K_1$. 
On the other hand, according to \zcref{lem:IHVK}, $X$ is either an independent set or a clique in $\Gaif(\mathbf M^{R})$, thus $\Gaif(\mathbf M^{R})[X\cup Y]$ is  isomorphic to $K_n\ctimes K_1$ or $M_n$, contradicting  property $\Pi^k_{R,R'}$ (as $n\ge k$).
\end{clproof}

We consider three different cases.
\begin{itemize}
    \item {\bf Case 1: $h=1$.} In this case there exist $R_1,R_2\in\sigma$ such that $\mu_1$ is vertical in $\mathbf M^{R_1}$ 
    (precisely, such that $(\mu_0,\mu_1)$ is conducting in $\mathbf M^{R_1}$)
    and horizontal in $\mathbf M^{R_2}$     (precisely, such that $(\mu_h,\mu_{h+1})$ is conducting in $\mathbf M^{R_2}$). If $R\notin\{R_1,R_2\}$, then $(\mu_1)$ is a transformer of $\mathbf M^{\sigma\setminus\{R\}}$ of order $n$. If $R_1=R_2=R$, then 
    $(\mu_1)$ is a transformer in $\mathbf M^{\{R,R'\}}$. By considering the transpose of the transformer if necessary, we can assume $R_1=R$.

    According to either \zcref{cl:1} (if $\mu_1$ is not independent in $\mathbf M^R$) or  \zcref{cl:3} (if $\mu_1$ is  independent in $\mathbf M^R$),  $\mathring{\mu}_{1}$ is horizontal or vertical in $\mathbf M^{R'}$.
    If $\mathring{\mu}_{1}$ is horizontal in $\mathbf M^{R'}$, then
    $(\mathring{\mu}_{1})$ is a transformer of order $n-2$ in $\mathbf M^{\{R,R'\}}$. Otherwise, $\mathring{\mu}_{1}$ is vertical in $\mathbf M^{R'}$ and $(\mathring{\mu}_{1})$ is a  transformer of order $n-2$ in $\mathbf M^{\{R',R_2\}}$, hence in $\mathbf M^{\sigma\setminus\{R\}}$.
    \item {\bf Case 2: $h=2$ and none of $\mu_1$ and $\mu_2$ is independent in $\mathbf M^R$.}
    According to \zcref{lem:IHVK} and the tightness of the transformer, $\mathring{\mu}_{1}$  and 
    $\mathring{\mu}_{2}$ are respectively  vertical and horizontal in $\mathbf M^R$. 
    According to \zcref{cl:1} and the tightness of the full transformer,  $\mathring{\mu}_{1}$ and $\mathring{\mu}_{2}$ are respectively  vertical and horizontal in $\mathbf M^{R'}$.
    Moreover, there exists a relation $R_2\in\sigma$ such that $(\mu_1,\mu_2)$ is conducting in $\mathbf M^{R_2}$. Thus, $(\mathring{\mu}_{1},\mathring{\mu}_{2})$ is a transformer of order $n-2$ both in
    $\mathbf M^{\{R,R_2\}}$ and in $\mathbf M^{\{R',R_2\}}$, hence 
    a transformer of order $n-2$ in
    $\mathbf M^{\{R,R'\}}$ (if $R_2\in\{R,R'\}$) or in $\mathbf M^{\sigma\setminus\{R\}}$ (otherwise).
    \item {\bf Case 3: $h>2$ or $h=2$ and $\mu_1$ or $\mu_2$ is independent in $\mathbf M^R$.}
    If $\mu_1$ is vertical in $\mathbf M^R$, then 
    by \zcref{cl:1,cl:3}, $\mathring{\mu}_{1}$ is horizontal or vertical in $\mathbf M^{R'}$, hence vertical in $\mathbf M^{R'}$ since the full transformer is tight;
    Similarly, if $\mu_h$ is horizontal in $\mathbf M^R$, then $\mathring{\mu}_{h}$ is horizontal in $\mathbf M^{R'}$.
    Moreover, according to \zcref{cl:4}, if a pair $(\mu_s,\mu_{s+1})$ is conducting in $\mathbf M^R$, either $(\mathring{\mu}_s,\mathring{\mu}_{s+1})$ is a transformer in $\mathbf M^{\{R,R'\}}$ or $(\mu_s,\mu_{s+1})$
     is conducting in $\mathbf M^{R'}$. If the first case never occurs,
     it follows that $(\mathring{\mu}_1,\dots,\mathring{\mu}_h)$ is a transformer in $\mathbf M^{\sigma\setminus\{R\}}$.
\end{itemize}
\end{proof}
\begin{theorem}[store=thm:main]
    Let $R,R'\in\sigma$ and let $\mathscr C$ be a class of $\sigma$-structures.
    If $\mathscr C^R\subseteq \obk$, $\mathscr C^{R'}\subseteq \obk$, and every structure in $\mathscr C$ has the $\Pi^k_{R,R'}$ property, then the following are equivalent:
    \begin{enumerate}
        \item\label{en:1} $\mathscr C$ is monadically dependent;
        \item\label{en:2} $\mathscr C^{\sigma\setminus\{R\}}$ and $\mathscr C^{\{R,R'\}}$ are monadically dependent;
    \end{enumerate}
\end{theorem}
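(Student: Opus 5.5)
The implication (\ref{en:1})$\Rightarrow$(\ref{en:2}) is immediate. Both $\mathscr C^{\sigma\setminus\{R\}}$ and $\mathscr C^{\{R,R'\}}$ are images of $\mathscr C$ under a simple interpretation that forgets some relations. Transductions compose, so the image of a monadically dependent class under a transduction is monadically dependent. (Equivalently, by \zcref{thm:trsf}: a minimal transformer in a reduct is also one in $\mathbf M$, because atomic types in $\mathbf M$ refine atomic types in the reduct. This route needs a little care, since a homogeneous pair in the reduct might fail to be homogeneous in $\mathbf M$. The transduction argument avoids that issue and is the one I will use.)

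For (\ref{en:2})$\Rightarrow$(\ref{en:1}) I argue by contraposition. Suppose $\mathscr C$ is monadically independent. By \zcref{lem:ftrsf} there are integers $h$ and $n_0$ such that for every $n\ge n_0$ some $\mathbf M_n\in\mathscr C$ contains a tight full transformer of length $h$ and order $n$. I restrict to $n\ge\max(n_0,2k)$.

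Each $\mathbf M_n$ satisfies the hypotheses of \zcref{lem:main}:
\begin{itemize}
\item $R$ and $R'$ are antisymmetric, since $\mathbf M_n^R$ and $\mathbf M_n^{R'}$ are oriented graphs;
\item both reducts lie in $\obk$;
\item $\mathbf M_n$ has property $\Pi^k_{R,R'}$.
\end{itemize}
So \zcref{lem:main} shows that $\mathbf M_n^{\sigma\setminus\{R\}}$ or $\mathbf M_n^{\{R,R'\}}$ contains a transformer of length at most $h$ and order at least $n-2$.

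By pigeonhole over infinitely many $n$, one of the two reduct classes receives such transformers for infinitely many $n$. Only finitely many lengths $h'\le h$ are possible, so a single length $h'$ occurs infinitely often in that class. Transformers of order $m$ restrict to transformers of every smaller order, by restricting the meshes to $[m']\times[m']$; regularity, conducting and homogeneity are preserved under such restriction. Hence that reduct class contains minimal transformers of length $h'$ and of arbitrarily large order. By \zcref{thm:trsf} it is monadically independent, contradicting (\ref{en:2}).

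The substantive work is all in \zcref{lem:main}, which is already established, so no step here is a real obstacle. One bookkeeping point needs checking. The transformer produced by \zcref{lem:main} is built from the restricted meshes $\mathring\mu_s$, and \zcref{thm:trsf} requires a \emph{minimal} transformer. I need to confirm either that the output satisfies minimality (distinctness, and the vertical/horizontal condition only at the ends), or, more simply, that any transformer in the looser sense contains a minimal one of no greater length and comparable order. I expect the latter to hold: take a shortest subsequence between a vertical mesh and a horizontal one. This is standard in \cite{dreier2024flipbreakability}, and the reduction from a transformer to a minimal transformer is the only point I would verify carefully.
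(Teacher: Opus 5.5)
Your proposal is correct and follows the paper's proof. You contrapose, use \zcref{lem:ftrsf} to obtain tight full transformers of length $h$ and every large order, apply \zcref{lem:main} to get a transformer of length at most $h$ and order at least $n-2$ in $\mathbf M^{\sigma\setminus\{R\}}$ or $\mathbf M^{\{R,R'\}}$, and conclude with \zcref{thm:trsf}.

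The extra bookkeeping you add is left implicit in the paper and is sound: taking $n\ge 2k$, pigeonholing on the reduct and the length, and extracting a minimal transformer as a shortest vertical-to-horizontal segment. One caution: trimming a mesh can make it vertical or horizontal (as in \zcref{lem:tour}), so restriction need not preserve minimality. Extract the minimal segment after any restriction, or skip the restriction, since arbitrarily large orders already suffice.
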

\begin{proof}
The implication $\eqref{en:1}\Rightarrow \eqref{en:2}$ 
is obvious.
Hence, the theorem will follow from the proof that  \eqref{en:2} implies \eqref{en:1}.

Assume for contradiction that \eqref{en:2} holds but \eqref{en:1} does not.
According to \zcref{lem:ftrsf} there exists integers $h$ and $n_0$ such that for every $n\ge n_0$ some $\mathbf M\in\mathscr C$ contains a tight full transformer of length $h$ and order $n$.
According to \zcref{lem:main},  $\mathbf M^{\{R,R'\}}$ or $\mathbf M^{\sigma\setminus\{R\}}$ contains a transformer of length $h$ and order at least $n-2$. Thus, it follows from \zcref{thm:trsf} that $\mathscr C^{\{R,R'\}}$ or $\mathscr C^{\sigma\setminus\{R\}}$ is monadically independent, contradicting~\eqref{en:2}.
\end{proof}
\section{Applications}
\label{sec:app}
\subsection{Reorientations of \texorpdfstring{$\{K_{k,k},S_{k,k}\}$}{\ifpdfstringunicode{\{K\unichar{"2096}\unichar{"2096},S\unichar{"2096}\unichar{"2096}\}}{\{Kkk,Skk\}}}-free graphs}
\label{sec:reorient}

In this section, we consider (binary) $\sigma$-structures with a special relation 
$R$ encoding the orientation of a $\{K_{k,k},S_{k,k}\}$-free graph.
Note that when $R'$ is a reorientation of $R$, the Gaifman graph of the $R$-reduct is the same as the one of the $R'$-reduct, hence the property $\Pi^k_{R,R'}$ trivially holds.

\begin{definition}
\label{def:overlay}
For two orientations $\mathbf G_1,\mathbf G_2$ of a graph $G$, we denote by $\mathbf G_1\overlay \mathbf G_2$ the 
\emph{overlay} of $\mathbf G_1$ and $\mathbf G_2$, which is the 
$\{R_1,R_2\}$-structure $\mathbf G$, such that $\mathbf G^{R_1}$ is $\mathbf G_1$ (with arc relation renamed $R_1$) and
$\mathbf G^{R_2}$ is $\mathbf G_2$ (with arc relation renamed $R_2$).
\end{definition}

We denote by $\mathsf G$ a sequence $(\mathbf G_i\colon i<\omega)$ of structures and we write $\mathsf G\subset\mathscr C$ if $\mathbf G_i\in\mathscr C$ for every $i<\omega$.
We say that a sequence $\mathsf G$ is \emph{monadically dependent} if $\{\mathbf G_i\colon i<\omega\}$ is monadically dependent.

\begin{definition}
    We say that a monadically dependent sequence $\mathsf G$ of oriented graphs is a \emph{compatible reorientation} of a monadically dependent sequence $\mathsf H$ of oriented graphs, and we write $\mathsf G\simor\mathsf H$ if each 
    $\mathbf H_i$ is a reorientation of $\mathbf G_i$ and
    $\mathsf G\overlay \mathsf H\is(\mathbf G_i\overlay\mathbf H_i\colon i<\omega)$ is monadically dependent.
\end{definition}

The next theorem witnesses that the  relation $\simor$ provides a sufficient condition ensuring that the reorientation of a relation defining an
oriented $\{K_{k,k},S_{k,k}\}$-free graph preserves monadic dependence.

\begin{theorem}[store=thm:reorient]
Let $R\in\sigma$.
    Let $\mathsf M$ and $\mathsf N$ be sequences of $\sigma$-structures such that $\mathbf M_i^{\sigma\setminus \{R\}}=\mathbf N_i^{\sigma\setminus \{R\}}$, 
    $\mathbf M_i^R\in\obk$ and 
    $\mathbf N_i^R$ is a reorientation of $\mathbf M_i^R$.

If $\mathsf M^R\simor\mathsf N^R$, then
$\mathsf M$ is monadically dependent if and only if $\mathsf N$ is monadically dependent.
\end{theorem}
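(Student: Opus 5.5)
By symmetry of the hypotheses it suffices to prove one direction: if $\mathsf M$ is monadically dependent, then so is $\mathsf N$. The relation $\simor$ is symmetric up to swapping the two relations in the overlay, and $\mathbf M_i^R$ is a reorientation of $\mathbf N_i^R$, with the same Gaifman graph, so $\mathbf N_i^R\in\obk$ as well.

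The idea is to glue $\mathbf M_i$ and $\mathbf N_i$ into a single structure and apply \zcref{thm:main} with $R$ playing the role of the relation to be forgotten. Introduce a fresh binary symbol $R'$ and let $\sigma'=\sigma\cup\{R'\}$. Let $\mathbf P_i$ be the $\sigma'$-structure with the common domain such that $\mathbf P_i^{\sigma}=\mathbf N_i$ and $R'(\mathbf P_i)=R(\mathbf M_i)$. Let $\mathscr C=\{\mathbf P_i\colon i<\omega\}$. Then $\mathscr C^{R}=\{\mathbf N_i^R\}\subseteq\obk$ and $\mathscr C^{R'}\cong\{\mathbf M_i^R\}\subseteq\obk$. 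Both relations are antisymmetric. Since $\mathbf P_i^R$ and $\mathbf P_i^{R'}$ have the same Gaifman graph, $\Pi^k_{R,R'}$ holds trivially: each forbidden configuration requires the two Gaifman graphs to differ on $X$.

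Next I check condition \eqref{en:2} of \zcref{thm:main}. First, $\mathscr C^{\sigma'\setminus\{R\}}$ is, after renaming $R'$ to $R$, exactly $\{\mathbf M_i\}$, because $\mathbf M_i^{\sigma\setminus\{R\}}=\mathbf N_i^{\sigma\setminus\{R\}}$. It is therefore monadically dependent by assumption. Second, $\mathscr C^{\{R,R'\}}$ is exactly $\{\mathbf N_i^R\overlay\mathbf M_i^R\}$, which up to renaming the two relations is $\mathsf M^R\overlay\mathsf N^R$. This class is monadically dependent by the definition of $\simor$. \zcref{thm:main} then gives that $\mathscr C$ is monadically dependent. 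Finally, $\{\mathbf N_i\}=\mathscr C^{\sigma}$ is a reduct of $\mathscr C$, hence a transduction of it, so it is monadically dependent.

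For the converse I would run the same construction with the roles of $\mathsf M$ and $\mathsf N$ exchanged. This uses $\mathbf N_i^R\in\obk$, noted above, and the fact that $\mathsf N^R\overlay\mathsf M^R$ is monadically dependent, which follows by swapping $R_1,R_2$. The main point to be careful about is bookkeeping: renaming relation symbols and reordering the overlay must preserve monadic dependence. This holds because both operations are simple interpretations. The other point to check is that the definition of $\simor$ indeed yields monadic dependence of the overlay in the needed order. I expect no real obstacle beyond this.
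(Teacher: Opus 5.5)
Your proposal is correct and essentially identical to the paper's proof. The paper builds the same combined structure, carrying the common $\sigma\setminus\{R\}$-part together with both orientations as two relations, and gets $\Pi^k_{R,R'}$ for free from the equal Gaifman graphs. It then applies the main theorem with the overlay as the two-relation reduct and $\mathsf M$ as the other reduct, passes to the reduct $\mathsf N$, and obtains the converse from the symmetry of $\simor$. Your explicit remark that $\mathbf N_i^R\in\obk$, which the converse direction needs, is a detail the paper leaves implicit.
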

\begin{proof}
    (For the definition of the overlay, which is an $\{R_1,R_2\}$-structure, refer to \zcref{def:overlay}.)
    Let $\tau=\sigma\setminus\{R\}\cup\{R_1,R_2\}$ and let
    $\mathbf S_i$ be the $\tau$-structure defined by 
    $\mathbf S_i^{\{R_1,R_2\}}=\mathbf M_i^R\overlay\mathbf N_i^R$ and 
    $\mathbf S_i^{\tau\setminus\{R_1,R_2\}}=\mathbf M_i^{\sigma\setminus\{R\}}$.
    As $\mathsf M^R\simor \mathsf N^R$, $\{\mathbf S_i^{\{R_1,R_2\}}\colon i<\omega\}$ is monadically dependent.
    Thus, according to 
    \zcref{thm:main}, if $\mathsf M$ is monadically dependent, then 
    $\mathsf S$ is monadically dependent, thus so is its reduct $\mathsf N$.
    The theorem then follows from the obvious observation that $\simor$ is a symmetric relation.
\end{proof}

\begin{corollary}
Let $k\in\mathbb N$ be fixed.
The relation $\simor$ is an equivalence relation on sequences of oriented graphs in $\obk$.
\end{corollary}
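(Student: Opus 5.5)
We need to show that $\simor$ is reflexive, symmetric and transitive on monadically dependent sequences of oriented graphs in $\obk$.

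\emph{Reflexivity and symmetry.} For reflexivity, $\mathsf G\overlay\mathsf G$ is obtained from $\mathsf G$ by a simple interpretation that duplicates the arc relation ($R_1,R_2\mapsto R$), so it is monadically dependent whenever $\mathsf G$ is. For symmetry, $\mathsf H\overlay\mathsf G$ is $\mathsf G\overlay\mathsf H$ with $R_1$ and $R_2$ swapped, and the reorientation relation is itself symmetric. Note also that every reorientation of a graph in $\obk$ lies in $\obk$, because the Gaifman graph is unchanged.

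\emph{Transitivity.} Assume $\mathsf G\simor\mathsf H$ and $\mathsf H\simor\mathsf K$. Let $\sigma=\{R_1,R_2,R_3\}$, and let $\mathbf S_i$ be the $\sigma$-structure on the common vertex set whose $R_1$, $R_2$, $R_3$ reducts are $\mathbf G_i$, $\mathbf H_i$, $\mathbf K_i$. All three reducts have the same Gaifman graph, so $\Pi^k_{R,R'}$ holds trivially for any pair of them, and all of them lie in $\obk$.

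Apply \zcref{thm:main} with $R=R_1$ and $R'=R_2$. The reduct $\mathbf S_i^{\sigma\setminus\{R_1\}}$ is $\mathbf H_i\overlay\mathbf K_i$, which is monadically dependent because $\mathsf H\simor\mathsf K$. The reduct $\mathbf S_i^{\{R_1,R_2\}}$ is $\mathbf G_i\overlay\mathbf H_i$, which is monadically dependent because $\mathsf G\simor\mathsf H$. Hence the class $\{\mathbf S_i\}$ is monadically dependent. Its $\{R_1,R_3\}$-reduct $\mathsf G\overlay\mathsf K$ is then monadically dependent as well. Since $\mathsf G$ and $\mathsf K$ are monadically dependent and each $\mathbf K_i$ is a reorientation of $\mathbf G_i$, this gives $\mathsf G\simor\mathsf K$.

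The only step requiring care is checking the hypotheses of \zcref{thm:main} in the transitivity argument, and in particular noticing that $\Pi^k$ is trivial here because the Gaifman graphs coincide.
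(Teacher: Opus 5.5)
Your proof is correct and is essentially the paper's argument. The paper applies \zcref{thm:reorient} with $\mathsf M=\mathsf F\overlay\mathsf G$ and $\mathsf N=\mathsf F\overlay\mathsf H$, and that theorem is itself proved by applying \zcref{thm:main} to exactly the three-relation structure you build, with $R'$ the middle orientation; you have simply inlined that step.
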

\begin{proof}
    The relation $\simor$ is obviously reflexive and symmetric. Let us prove that it is also transitive.
    Assume $\mathsf F\simor \mathsf G$ and $\mathsf G\simor \mathsf H$.
    According to \zcref{thm:reorient}, 
    the condition $\mathsf G\simor \mathsf H$ implies that the sequence $\mathsf F\overlay \mathsf G$ is monadically dependent if and only if $\mathsf F\overlay \mathsf H$ is monadically dependent. Hence, $\mathsf F\simor \mathsf H$.
\end{proof}
\subsection{Replacing an oriented graph with small independence number}
\label{sec:replace}
We now consider the case where a relation $R$ defines the orientation of a graph with independence number less than $k$.
As the forbidden configurations defining the property $\Pi^k_{R,R'}$ all involve an independent set of size $k$, if one considers another relation $R'$ defining an orientation of a graph with independence number less than $k$, then the property $\Pi^k_{R,R'}$ trivially holds.

Let $k\in\mathbb N$ be fixed and let $\osk$ be the class of all the orientations of graphs with independence number less than $k$.
We define a relation $\sima$ analogous to the relation $\simor$ introduced in the previous section.

\begin{definition}
    We say that a monadically dependent sequence $\mathsf G\subset\osk$ is a \emph{compatible replacement} of a monadically dependent sequence $\mathsf H\subset\mathscr \osk$, and we write $\mathsf G\sima\mathsf H$ if each 
    $\mathbf H_i$ has the same domain as $\mathbf G_i$ and
    $\mathsf G\overlay \mathsf H\is(\mathbf G_i\overlay\mathbf H_i\colon i<\omega)$ is monadically dependent.
\end{definition}

\begin{theorem}[store=thm:calpha]
Let $R\in\sigma$.
    Let $\mathsf M$ and $\mathsf N$ be sequences of $\sigma$-structures such that $\mathbf M_i^{\sigma\setminus \{R\}}=\mathbf N_i^{\sigma\setminus \{R\}}$  and 
    $\mathbf M_i^R,\mathbf N_i^R\in\osk$.

If $\mathsf M^R\sima\mathsf N^R$, then
$\mathsf M$ is monadically dependent if and only if $\mathsf N$ is monadically dependent.
\end{theorem}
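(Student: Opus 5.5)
The plan is to mimic the proof of \zcref{thm:reorient}, with \zcref{thm:main} applied once. Let $\tau=\sigma\setminus\{R\}\cup\{R_1,R_2\}$. For each $i$, let $\mathbf S_i$ be the $\tau$-structure with the common domain of $\mathbf M_i$ and $\mathbf N_i$, such that
\[
\mathbf S_i^{\{R_1,R_2\}}=\mathbf M_i^R\overlay\mathbf N_i^R
\quad\text{and}\quad
\mathbf S_i^{\tau\setminus\{R_1,R_2\}}=\mathbf M_i^{\sigma\setminus\{R\}}.
\]
Here the overlay is understood for two antisymmetric relations on the same domain, not necessarily with the same Gaifman graph; the definition of $\sima$ already uses it this way. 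Up to renaming $R_1$ as $R$, we have $\mathbf S_i^{\tau\setminus\{R_2\}}=\mathbf M_i$. Up to renaming $R_2$ as $R$, we have $\mathbf S_i^{\tau\setminus\{R_1\}}=\mathbf N_i$.

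Next we check the hypotheses of \zcref{thm:main} with $R:=R_2$ and $R':=R_1$, and $k$ as in $\osk$. The reducts $\mathbf S_i^{R_1}=\mathbf M_i^R$ and $\mathbf S_i^{R_2}=\mathbf N_i^R$ are oriented graphs whose Gaifman graphs have independence number less than $k$. Such graphs contain no induced $K_{k,k}$ and no induced $S_{k,k}$, since each of these has an independent set of size $k$. So both reducts lie in $\obk$.

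For property $\Pi^k_{R_2,R_1}$, every forbidden configuration requires $\Gaif(\mathbf S_i^{R_1})[X]$ to be $k\,K_1$, $K_k\cup k\,K_1$ or $2k\,K_1$. Each of these contains an independent set of size at least $k$, which is impossible. Hence $\Pi^k_{R_2,R_1}$ holds vacuously, and by the symmetric argument so does $\Pi^k_{R_1,R_2}$.

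Now assume $\mathsf M$ is monadically dependent. The reduct $\mathsf S^{\tau\setminus\{R_2\}}$ is a copy of $\mathsf M$, so it is monadically dependent. The reduct $\mathsf S^{\{R_1,R_2\}}=\mathsf M^R\overlay\mathsf N^R$ is monadically dependent by the definition of $\mathsf M^R\sima\mathsf N^R$. By \zcref{thm:main} applied to $\mathscr C=\{\mathbf S_i\colon i<\omega\}$, the sequence $\mathsf S$ is monadically dependent. Its reduct $\mathsf S^{\tau\setminus\{R_1\}}$, which is $\mathsf N$ after renaming, is therefore monadically dependent, because reducts are transductions.

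For the converse, exchange the roles of $\mathsf M$ and $\mathsf N$ and apply \zcref{thm:main} with $R:=R_1$ and $R':=R_2$. This works because $\sima$ is symmetric: the overlay with its two relations swapped is just a renaming.

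There is no real obstacle here. The only points needing care are that the overlay makes sense when the two Gaifman graphs differ, and that $\Pi^k$ holds vacuously because every pattern it forbids contains an independent set of size $k$ in the $R'$-reduct.
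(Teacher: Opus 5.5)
Your proposal is correct and follows the paper's proof: form the $\tau$-structures $\mathbf S_i$ from $\mathbf M_i^{\sigma\setminus\{R\}}$ and the overlay $\mathbf M_i^R\overlay\mathbf N_i^R$, apply \zcref{thm:main}, pass to the reduct $\mathsf N$, and conclude by symmetry of $\sima$. Your explicit checks that $\osk\subseteq\obk$ and that $\Pi^k$ holds vacuously (each forbidden $R'$-pattern contains an independent set of size $k$) are exactly the facts the paper leaves implicit.
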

\begin{proof}
    The proof is analogous to the one of \zcref{thm:reorient}.
    
    Let $\tau=\sigma\setminus\{R\}\cup\{R_1,R_2\}$ and let
    $\mathbf S_i$ be the $\tau$-structure defined by 
    $\mathbf S_i^{\{R_1,R_2\}}=\mathbf M_i^R\overlay\mathbf N_i^R$ and 
    $\mathbf S_i^{\tau\setminus\{R_1,R_2\}}=\mathbf M_i^{\sigma\setminus\{R\}}$.
    As $\mathsf M^R\sima \mathsf N^R$, $\{\mathbf S^{\{R_1,R_2\}}\colon i<\omega\}$ is monadically dependent.
    Thus, according to 
    \zcref{thm:main}, if $\mathsf M$ is monadically dependent, then 
    $\mathsf S$ is monadically dependent, thus so is its reduct $\mathsf N$.
    The reverse implication follows from the obvious observation that $\sima$ is a symmetric relation.
\end{proof}
\begin{corollary}
    Let $k$ be fixed.
    The relation $\sima$ is an equivalence relation on sequences of oriented graphs belonging to $\osk$.
\end{corollary}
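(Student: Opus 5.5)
The proof will follow the one given for the analogous corollary on $\simor$, with \zcref{thm:calpha} playing the role of \zcref{thm:reorient}. Reflexivity and symmetry are immediate from the definition. For reflexivity, let $\mathsf G\subset\osk$ be monadically dependent. The overlay $\mathbf G_i\overlay\mathbf G_i$ has $R_1=R_2$, so it is obtained from $\mathbf G_i$ by a simple interpretation that copies the arc relation twice. Hence $\mathsf G\overlay\mathsf G$ is a transduction of $\mathsf G$ and is therefore monadically dependent. Symmetry holds because $\mathbf H_i\overlay\mathbf G_i$ is obtained from $\mathbf G_i\overlay\mathbf H_i$ by renaming $R_1$ and $R_2$, and renaming relation symbols preserves monadic dependence.

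For transitivity, assume $\mathsf F\sima\mathsf G$ and $\mathsf G\sima\mathsf H$. All three sequences lie in $\osk$, are monadically dependent, and are pointwise on common domains. We must show that $\mathsf F\overlay\mathsf H$ is monadically dependent. Take the signature $\sigma=\{R_1,R\}$ and define two sequences of $\sigma$-structures. Let $\mathbf M_i$ interpret $R_1$ as the arcs of $\mathbf F_i$ and $R$ as the arcs of $\mathbf G_i$, so that $\mathbf M_i=\mathbf F_i\overlay\mathbf G_i$. Let $\mathbf N_i$ interpret $R_1$ as the arcs of $\mathbf F_i$ and $R$ as the arcs of $\mathbf H_i$, so that $\mathbf N_i=\mathbf F_i\overlay\mathbf H_i$.

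These sequences satisfy the hypotheses of \zcref{thm:calpha}. First, $\mathbf M_i^{\sigma\setminus\{R\}}=\mathbf N_i^{\sigma\setminus\{R\}}=\mathbf F_i$. Second, $\mathbf M_i^R=\mathbf G_i$ and $\mathbf N_i^R=\mathbf H_i$ both lie in $\osk$. Third, $\mathsf M^R=\mathsf G\sima\mathsf H=\mathsf N^R$. Since $\mathsf F\sima\mathsf G$, the sequence $\mathsf M=\mathsf F\overlay\mathsf G$ is monadically dependent. \zcref{thm:calpha} then gives that $\mathsf N=\mathsf F\overlay\mathsf H$ is monadically dependent, which yields $\mathsf F\sima\mathsf H$.

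The only point needing care is matching the definition of the overlay with the setup of \zcref{thm:calpha}. The overlay in \zcref{def:overlay} was introduced for two orientations of the same graph, but the definition of $\sima$ uses it only for structures on a common domain. So $\mathbf F_i\overlay\mathbf G_i$ should simply be read as the $\{R_1,R_2\}$-structure combining both relations, and renaming $R_2$ to $R$ is harmless. This is the same bookkeeping already carried out in the proof of \zcref{thm:calpha}, so I expect no real obstacle. Everything substantive rests on \zcref{thm:main}, which applies because the property $\Pi^k_{R,R'}$ holds trivially when both relations have independence number less than $k$.
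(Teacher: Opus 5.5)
Your proof is correct and follows essentially the paper's route. The paper gives no separate proof of this corollary and defers to the $\simor$ case: there, reflexivity and symmetry are called obvious, and transitivity comes from applying the transfer theorem (here \zcref{thm:calpha}) to $\mathsf F\overlay\mathsf G$ versus $\mathsf F\overlay\mathsf H$, with the varying relation in the role of $R$, exactly as you do. Your explicit interpretations for reflexivity and symmetry, and your restriction of reflexivity to monadically dependent sequences (the only ones on which $\sima$ is defined), simply fill in details the paper leaves implicit.
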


\subsection{Characterization of classes with bounded twin-width}
We now extend \zcref{thm:tww4} by considering expansion by an oriented graph with bounded independence number instead of an expansion by a transitive tournament (i.e., linear order).





\begin{theorem}[store=thm:nalpha]
 A class of binary structures has bounded twin-width if and only if it can be expanded into a monadically dependent class by an oriented graph with bounded independence number.

 Moreover, \FO-model checking is {\FPT} on  such a monadically dependent expansion.
\end{theorem}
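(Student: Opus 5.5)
The forward direction follows from \zcref{thm:tww4}. If $\mathscr C$ has bounded twin-width, it expands into a monadically dependent class of linearly ordered binary structures. A linear order is a transitive tournament, so it is an oriented graph with independence number $1$, and this direction is done.

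For the converse, let $\mathscr D$ be a monadically dependent class of $\sigma$-structures, where $\sigma=\sigma_0\cup\{R\}$. Assume every $\mathbf M\in\mathscr D$ has $\mathbf M^R\in\osk$ for a fixed $k$, and that $\mathscr C=\mathscr D^{\sigma_0}$. The plan is to replace $R$ by a linear order $<$ and then apply \zcref{thm:tww4}.

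Since $\mathscr D^R$ is a monadically dependent class of oriented graphs with independence number less than $k$, \zcref{thm:alpha} gives that $\mathscr D^R$ has bounded twin-width. Applying \zcref{thm:tww4} to $\mathscr D^R$, each $\mathbf M^R$ admits a linear order $<_{\mathbf M}$ such that the class of overlays $\{(\mathbf M^R,<_{\mathbf M})\}$ is monadically dependent. (This is the structure $\mathbf M^R\overlay$ linear order in the sense of \zcref{def:overlay}, extended to the case where the two relations have different Gaifman graphs.) Now form $\mathbf S=(\mathbf M,<_{\mathbf M})$ over $\sigma\cup\{<\}$. Both $R$ and $<$ are antisymmetric, and both reducts lie in $\osk\subseteq\obk$: a graph with $\alpha<k$ excludes $K_{k,k}$ and $S_{k,k}$, since each contains an independent $k$-set. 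Property $\Pi^k_{<,R}$ holds trivially, because every forbidden configuration in it requires an independent set of size $k$ in the second reduct, which is impossible in $\osk$.

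We apply \zcref{thm:main} with the relation $<$ playing the role of $R$ and $R$ playing the role of $R'$. Then $\{\mathbf S\}$ is monadically dependent as soon as $\{\mathbf S^{\sigma}\}=\mathscr D$ and $\{\mathbf S^{\{<,R\}}\}$ are, and both hold by construction. So $\{\mathbf S\}$ is monadically dependent, and its reduct $\{(\mathbf M^{\sigma_0},<)\}$ is monadically dependent and linearly ordered. By \zcref{thm:tww4}, $\mathscr C$ has bounded twin-width.

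The main obstacle is that \zcref{thm:tww4}, as stated, only gives existence of an order. The argument needs this order chosen uniformly for the whole class $\mathscr D^R$, which is what the theorem's statement provides when applied to that class.

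For the \FPT\ claim, given $\mathbf M\in\mathscr D$ we compute a good order directly from $\mathbf M^R$. The algorithmic content of \zcref{thm:alpha} from \cite{geniet2025orderlogictwinwidthtournaments} yields, in polynomial time, a contraction sequence or an order $<$ under which $(\mathbf M^R,<)$ stays in a monadically dependent ordered class. Adding $<$ to $\mathbf M$ gives $\mathbf S$, which lies in a monadically dependent ordered class by the argument above. We then run the \FPT\ model checking of \cite{tww4,tww5} on $\mathbf S$, after computing a contraction sequence for ordered structures in polynomial time, and the formula is simply evaluated in the reduct. The delicate point is checking that the order produced algorithmically by \cite{geniet2025orderlogictwinwidthtournaments} is the kind that makes $(\mathbf M^R,<)$ monadically dependent; I expect the orders used there to be definable from $\mathbf M^R$ up to bounded modifications, which would suffice.
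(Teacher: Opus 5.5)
Your proof is correct and follows the paper's argument. You apply \zcref{thm:alpha} and \zcref{thm:tww4} to $\mathscr D^R$ to obtain a linear order, then apply \zcref{thm:main} with $<$ as the forgotten relation and $R$ in the role of $R'$, where $\Pi^k_{<,R}$ holds trivially because $\alpha<k$; for the algorithmic part you compute the order on $\mathbf M^R$ in polynomial time via \cite{geniet2025orderlogictwinwidthtournaments} and run the model-checking algorithm for ordered structures. The only differences are cosmetic: you conclude via the ordered reduct $(\mathbf M^{\sigma_0},<)$ rather than via $\mathscr D$, and the ``delicate point'' you flag needs no definability argument, since the cited result already says the computed order yields a bounded twin-width, hence monadically dependent, ordered class, which is all the argument uses.
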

\begin{proof}
  Let $\sigma$ be a binary relational signature and let $\mathscr C$ be a class of $\sigma$-structures.
    We first prove the characterization of twin-width boundedness.
    
    Assume $\mathscr C$ has bounded twin-width. According to \zcref{thm:tww4}, $\mathscr C$ has a monadic expansion by a linear order, which is a transitive tournament, hence an oriented graph with independence number $1$.

    Conversely, assume that $\mathscr C$ can be expanded into a monadically dependent class $\mathscr D$ by an oriented graph with independence number strictly less than $k$ (the arcs corresponding to a new relation $R$). Note that $\mathscr D$ is a monadically dependent class of $(\sigma\cup\{R\})$-structures.   Let $\mathscr D^R=\{\mathbf M^R\colon \mathbf M\in\mathscr D\}$.
    According to \zcref{thm:alpha}, the class $\mathscr D^R$ has bounded twin-width. Hence, according to \zcref{thm:tww4}, the class $\mathscr D^R$ can be expanded into a monadically dependent class by a linear order $<$.  Let $\sigma^+$ be the signature $\sigma\cup\{R,<\}$.
    By accordingly expanding the original structures in $\mathscr D$ we get a class $\mathscr E$ of $\sigma^+$-structures such that
    $\mathscr E^{\sigma^+\setminus\{<\}}$ (i.e., $\mathscr D$) and $\mathscr E^{\{R,<\}}$ are monadically dependent. Hence, according to \zcref{thm:main}, $\mathscr E$ is monadically dependent.
   By \zcref{thm:tww4}, the class 
   $\mathscr E^{\sigma^+\setminus\{<\}}$ (i.e., $\mathscr D$)
has bounded twin-width,  and so does the class $\mathscr C$.
   
    We now consider the algorithmic complexity part of the statement: According to \cite{geniet2025orderlogictwinwidthtournaments}, the monadically dependent expansion of $\mathbf M^R\in\mathscr D^R$ by a linear order can be computed in polynomial time.
    According to the above, this provides a monadically dependent expansion $\mathscr E$ of $\mathscr D$ by a linear order. According to \cite[Theorem 9]{tww5}, \FO-model checking is {\FPT} on $\mathscr E$.
\end{proof}

As tournaments are oriented graphs of bounded independence number and linear orders are tournaments, we get the following corollary of \zcref{thm:tww4,thm:nalpha}.

\begin{corollary}[store=cor:tourn]
    A class of binary structures has bounded twin-width if and only if it can be expanded into a monadically dependent class by a tournament.
\end{corollary}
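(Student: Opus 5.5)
The corollary is obtained by sandwiching tournaments between linear orders and oriented graphs of bounded independence number, so the plan is to prove the two implications separately, each from one of the cited characterizations.

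\emph{Forward direction.} Assume the class $\mathscr C$ has bounded twin-width. By \zcref{thm:tww4}, $\mathscr C$ can be expanded by a linear order into a monadically dependent class. A linear order is by definition a transitive tournament. Hence this expansion is already an expansion by a tournament, and nothing further needs to be checked.

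\emph{Backward direction.} Assume $\mathscr C$ can be expanded by a tournament, interpreted by a new relation $R$, into a monadically dependent class $\mathscr D$. A tournament is an oriented graph in which any two distinct vertices are joined by an arc, so its Gaifman graph is complete and its independence number is $1$. Thus $\mathscr D$ is an expansion of $\mathscr C$ by an oriented graph of independence number less than $k=2$. Applying \zcref{thm:nalpha} with this uniform bound gives that $\mathscr C$ has bounded twin-width.

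The only point that needs checking is that the bounded-independence hypothesis of \zcref{thm:nalpha} is met with one bound for the whole class. This is immediate, since every tournament has independence number exactly $1$ (or $0$ on the empty domain). As a consequence, \FO-model checking is also \FPT\ on such an expansion, by the algorithmic part of \zcref{thm:nalpha}.
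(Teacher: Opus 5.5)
Your proof is correct and is exactly the paper's argument. The forward direction uses \zcref{thm:tww4} together with the fact that a linear order is a (transitive) tournament, and the backward direction applies \zcref{thm:nalpha} with the uniform bound coming from every tournament having independence number $1$.
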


Also, as the independence number of the comparability graph of a poset $P$ is nothing but the width of $P$, we have
\begin{corollary}[store=cor:width]
    Let $k$ be a positive integer.
    A class of binary structures has bounded twin-width if and only if it can be expanded into a monadically dependent class by a poset of width at most $k$.
\end{corollary}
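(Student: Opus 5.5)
The plan is to sandwich the poset case between \zcref{thm:tww4} and \zcref{thm:nalpha}, exactly as for \zcref{cor:tourn}. Fix a positive integer $k$ and a class $\mathscr C$ of binary $\sigma$-structures.

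For the forward direction, assume $\mathscr C$ has bounded twin-width. By \zcref{thm:tww4}, $\mathscr C$ can be expanded by a linear order $<$ into a monadically dependent class. A linear order is a poset whose antichains are singletons, so it has width $1\le k$. This expansion is therefore already an expansion by a poset of width at most $k$.

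For the converse, assume $\mathscr C$ can be expanded into a monadically dependent class $\mathscr D$ by a new relation $R$ whose interpretation is, in every structure, a partial order of width at most $k$. By our convention a partial order is anti-reflexive and antisymmetric, so $\mathbf M^R$ is an oriented graph for each $\mathbf M\in\mathscr D$. Its Gaifman graph is the comparability graph, so, as noted before the statement, $\alpha(\mathbf M^R)$ equals the width of the poset. Hence $\alpha(\mathbf M^R)\le k<k+1$, and $\mathscr D$ is an expansion of $\mathscr C$ by an oriented graph of bounded independence number. Applying \zcref{thm:nalpha} (with the bound $k+1$ on the independence number) shows that $\mathscr C$ has bounded twin-width. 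As a bonus, \FO-model checking is \FPT{} on $\mathscr D$.

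There is no genuine obstacle. The only points needing care are that the poset relation is loopless and antisymmetric, so that it really is an oriented graph in the sense of \zcref{thm:nalpha}, and that width corresponds exactly to independence number of the Gaifman graph. Both hold by the definitions in \zcref{sec:prelim}.
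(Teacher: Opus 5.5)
Your proof is correct and follows essentially the same route as the paper, which deduces the corollary from \zcref{thm:tww4} and \zcref{thm:nalpha}. The forward direction uses that a linear order is a poset of width $1\le k$. The converse uses that the width of a poset equals the independence number of its comparability (Gaifman) graph, so the expansion is by an oriented graph of bounded independence number.
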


\subsection{New delineated classes of oriented graphs}
For a positive integer $k$, let $\mathscr A_k$ be the class of all graphs $G$ such that $G$ has an independent set $I$ with $\alpha(G-I)<k$, and let $\oak$ be the class of all the orientations of graphs in $\mathscr A_k$.

\begin{lemma}
\label{lem:PMIS}
    Assume $G$ has an independent set $I$ such that $\alpha(G-I)<k$.
    Then, an independent set with this property can be found in polynomial time.
\end{lemma}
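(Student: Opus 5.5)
Since $k$ is fixed, I would not try to recover $I$ structurally. Instead, I would compute a single independent set $J$ that is guaranteed to lie within bounded symmetric difference of \emph{some} valid independent set, and then enumerate all candidates near $J$. The key observation is that valid sets are rigid. If $I^\ast$ is independent with $\alpha(G-I^\ast)<k$, then for every independent set $J$ of $G$ the set $J\setminus I^\ast$ is independent in $G-I^\ast$. Hence $|J\setminus I^\ast|\le k-1$.

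\emph{Step 1 (local search).} Start from any independent set $J$ and apply the following improvement rule. If there exist $A\subseteq J$ with $|A|\le k-1$ and $B\subseteq V(G)\setminus J$ with $|B|=|A|+1$ such that $(J\setminus A)\cup B$ is independent, replace $J$ by $(J\setminus A)\cup B$. Each move increases $|J|$, so there are at most $n$ moves. Each move is found by brute force in time $O(n^{2k+O(1)})$. At the end, $J$ is a $(k-1)$-swap local optimum.

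\emph{Step 2 (closeness to $I^\ast$).} Fix a valid $I^\ast$ and put $D=J\setminus I^\ast$, so $|D|\le k-1$ by the observation above. Put $B=I^\ast\setminus J$. Since $J\setminus D\subseteq I^\ast$ and $B\subseteq I^\ast$, the set $(J\setminus D)\cup B$ is contained in $I^\ast$ and is therefore independent. If $|B|\ge |D|+1$, then $A=D$ together with any $(|D|+1)$-subset of $B$ is an improving swap, contradicting local optimality. Hence $|B|\le |D|\le k-1$, and
\[
I^\ast=(J\setminus D)\cup B\qquad\text{with }|D|,|B|\le k-1 .
\]

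\emph{Step 3 (enumeration).} For every $D\subseteq J$ and $B\subseteq V(G)\setminus J$ with $|D|,|B|\le k-1$, form the set $I=(J\setminus D)\cup B$. There are $O(n^{2k-2})$ such candidates. For each one, check that $I$ is independent and that $\alpha(G-I)<k$. The second check is done by testing all $k$-subsets of $V(G)\setminus I$, in time $O(n^{k+2})$. By Step 2 some candidate equals $I^\ast$, so the algorithm outputs a valid independent set. The total running time is $n^{O(k)}$, which is polynomial for fixed $k$.

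The only delicate point is Step 2. A merely maximal independent set $J$ does not suffice, since $I^\ast\setminus J$ can be unbounded. Requiring local optimality under swaps of size at most $k-1$ is exactly what bounds $|I^\ast\setminus J|$ by $|J\setminus I^\ast|$. Everything else is routine enumeration.
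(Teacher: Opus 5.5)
Your proof is correct, and it takes a different route from the paper's. Both arguments start from the same observation: every independent set $J$ satisfies $|J\setminus I^\ast|\le k-1$.

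The paper first notes that a valid set may be assumed maximal, since enlarging $I$ only shrinks $G-I$. It then turns the observation into a counting bound. A maximal independent set $J$ is determined by $J\setminus I$, because $J=(J\setminus I)\cup\bigl(I\setminus N(J\setminus I)\bigr)$, so $G$ has at most $n^{k-1}$ maximal independent sets. The paper lists them all with the Tsukiyama et al.\ enumeration algorithm and tests each one, using Ne\v{s}et\v{r}il--Poljak for the $\alpha$-check.

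You instead turn the observation into a localization statement. A $(k-1)$-swap local optimum, found by elementary local search, lies within symmetric difference $2(k-1)$ of every valid set, so brute force over a polynomial-size neighborhood suffices. Your remark that a single maximal independent set would not do marks exactly where the two approaches diverge. The paper imposes maximality on the target and compensates by enumerating all maximal independent sets, whereas you strengthen the anchor $J$.

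The paper's route is shorter given off-the-shelf tools, and it yields the structural fact that such graphs have only polynomially many maximal independent sets. Yours is fully self-contained. It also shows in passing that every valid independent set lies among your $O(n^{2k-2})$ candidates, so all of them can be listed. Both run in time $n^{O(k)}$, which is all that is needed for fixed $k$.
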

\begin{proof}
    First notice that if such an independent set exists, we can assume that it is a maximal independent set.
    \begin{claim}
        Assume $\alpha(G-I)<k$. Then $G$ contains at most $n^{k-1}$ maximal independent sets.
    \end{claim}
    \begin{clproof}
    Let $J$ be a maximal independent set of $G$. Then, $J\setminus I$ is an independent set of $G-I$, hence
    $|J\setminus I|<k$.
    Let $J'\is (J\setminus I)\cup (I\setminus \bigcup_{v\in J\setminus I}N(v))$.
    As no neighbor of a vertex in $J$ is in $J$, we get $J\subseteq J'$. As $J'$ is independent by construction, we deduce $J=J'$ from the maximality of $J$.
    It follows that $J\mapsto J\setminus I$ is an injection from maximal independent sets of $G$ to subsets of less than $k$ vertices.
    \end{clproof}
    
   We enumerate all maximal independent sets of $G$ in $O(n^{k}m)$-time using \cite{tsukiyama1977new}, where $n$ is the number of vertices of $G$ and $m$ its number of edges.
    For each maximal independent set $J$, we check whether $\alpha(G-J)<k$ in time $n^{k\omega/3}$ using \cite{nevsetvril1985complexity}, where $\omega < 2.4$ is the fast matrix multiplication constant. Hence, the global computation time is $O(n^{k(1+\omega/3)}m)$, which is polynomial in the input size.
\end{proof}

\begin{theorem}[store=thm:delin]
    For every integer $k$, the class $\oak$ of all the orientations of graphs $G$ such that 
    $\min\{\alpha(G-I)\colon I\text{ independent set of }G\}<k$
     is delineated  by twin-width and \FO-model checking is {\FPT} on every monadically dependent subclass of $\oak$.
\end{theorem}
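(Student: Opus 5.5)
Since bounded twin-width implies monadic dependence, only one direction of delineation needs work: every monadically dependent subclass $\mathscr D\subseteq\oak$ has bounded twin-width, and moreover \FO-model checking is {\FPT} on it. The plan is to expand each $\mathbf G\in\mathscr D$ by a new antisymmetric relation $R'$ defining a tournament (so $\alpha<2\le k$ if $k\ge2$). The expansion must be \FO-definable, or at least computable in polynomial time and monadically dependent. We then apply \zcref{thm:main} with $R$ the original arc relation and finally \zcref{thm:nalpha} (or \zcref{cor:tourn}).

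\textbf{Construction.} Fix $\mathbf G\in\mathscr D$ and use \zcref{lem:PMIS} to compute in polynomial time an independent set $I$ with $\alpha(G-I)<k$. Mark $I$ by a unary predicate. Define $R'$ as follows:
\begin{itemize}
\item on $V\setminus I$, $R'$ coincides with $R$;
\item every non-adjacent pair inside $V\setminus I$, every pair inside $I$, and every pair between $I$ and $V\setminus I$ is oriented according to a linear order.
\end{itemize}
It is simplest to take a linear order $<$ on the whole domain and set $R'(x,y)\iff R(x,y)\vee(\neg R(y,x)\wedge x<y)$. This $R'$ is a tournament extending $R$, so $\Gaif(\mathbf G^{R})\subseteq\Gaif(\mathbf G^{R'})$. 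The property $\Pi^k_{R,R'}$ then holds trivially: every forbidden configuration in the $R'$-reduct contains an independent $k$-set, which a tournament lacks, provided $k\ge2$ (the degenerate small cases are handled by adjusting $k$).

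We also need $\mathbf G^{R}\in\obk$. This holds after enlarging $k$: a $K_{k',k'}$ or $S_{k',k'}$ with $k'\ge 2k$ would contain an independent $k$-subset outside $I$, since at most one of its independent sides meets $I$ in many vertices and the other side, of size $\ge k'$, loses at most one vertex to $I$ (two vertices of $I$ on opposite sides are non-adjacent, which is impossible in a biclique). So $\mathscr D^R\subseteq\overrightarrow{\mathscr B_{2k}}$, and $R'$ is a tournament, hence also in that class.

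\textbf{Main obstacle.} The hard step is choosing $<$ so that $\{R,R'\}$-reducts stay monadically dependent. Since $R'$ is \FO-definable from $R$, $<$ and the colour $I$, it suffices that $\mathscr D$ expanded by $<$ (and $I$) is monadically dependent, which is circular: that is essentially bounded twin-width. To break this, I would avoid an arbitrary order and work with the structure of the pieces. First, $\mathbf G[V\setminus I]$ has bounded independence number, so by \zcref{thm:alpha} it has bounded twin-width, and a monadically dependent order $<_1$ on it is computable in polynomial time \cite{geniet2025orderlogictwinwidthtournaments}. Extend it to $I$ by ordering vertices of $I$ lexicographically according to their neighbourhood traces along $<_1$; this is the standard way to order one side of a bipartite-like structure. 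Then verify, via \zcref{thm:trsf} and the lemmas of \zcref{sec:main}, that any transformer in the expanded structure yields one in $\mathscr D$ or in the ordered part $V\setminus I$. Verifying this monadic dependence is where I expect the real work. A fallback is to apply \zcref{thm:main} twice: first to replace $R$ on $V\setminus I$ by a tournament (using \zcref{thm:calpha}-type compatibility), and then to handle the independent set $I$, whose edges to $V\setminus I$ form a one-sided structure.

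\textbf{Conclusion.} Once $\mathscr D$ expanded by the tournament $R'$ is monadically dependent, \zcref{cor:tourn} gives bounded twin-width. For the algorithmic part, the expansion is computed in polynomial time and \zcref{thm:nalpha} gives {\FPT} \FO-model checking on the expansion, hence on $\mathscr D$.
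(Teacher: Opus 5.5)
Your proposal has a genuine gap. The step you call the main obstacle is never carried out, and the concrete set-up you propose for \zcref{thm:main} cannot work.

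\textbf{The $\obk$ hypothesis fails.} Your claim that $\mathscr D^R\subseteq\overrightarrow{\mathscr B_{2k}}$ is false. Your counting is valid for $K_{k',k'}$: if one side meets $I$, the other side avoids $I$ and is an independent set of $G-I$. But $S_{k',k'}$ has only one independent side, and that side may lie entirely inside $I$. Concretely, $S_{n,n}$ (a clique $A$ of size $n$ completely joined to an independent set $B$ of size $n$) belongs to $\mathscr A_2$ with $I=B$, since $G-I=K_n$. Orient $A$ transitively and all arcs from $A$ to $B$. This gives a class of bounded twin-width, hence a monadically dependent subclass of $\oak$ whose arc-reducts lie in no $\overrightarrow{\mathscr B_{k'}}$. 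This is exactly the split-graph situation the theorem is meant to cover.

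\textbf{The circularity is not resolved.} Even setting this aside, take $R$ to be the full arc relation and $R'$ a tournament built from $R$, $<$ and $I$. Then the $\{R,R'\}$-reduct is essentially the whole structure, so \zcref{thm:main} reduces nothing. You are left with the circularity you yourself point out.

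\textbf{The missing idea: split the arcs according to $I$, and let the new relation live only on $V\setminus I$.}
\begin{enumerate}
\item Let $R$ be the arcs with both ends outside $I$, and $S$ the arcs meeting $I$.
\item Let $R'$ be a linear order $<_0$ on $V\setminus I$ only. Choose it via \zcref{thm:alpha} and \zcref{thm:tww4} applied to the class $\{\mathbf G-I\}$, which is a transduction of $\mathscr D$, so that $(\mathbf G-I,<_0)$ is monadically dependent.
\item Every edge of $R$ and of $R'$ lies inside $V\setminus I$, where the independence number is below $k$, and the vertices of $I$ are isolated. Hence both reducts are in $\obk$.
\item $\Pi^k$ holds trivially: every vertex of a forbidden configuration in $\Gaif(\mathbf G^R)$ is non-isolated, hence lies in $V\setminus I$, where $R'$ is complete.
\item The two reducts that \zcref{thm:main} asks about are now $\{R,S\}$, a transduction of $\mathscr D$, and $\{R,R'\}$, which is $(\mathbf G-I,<_0)$ plus isolated vertices. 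Both are monadically dependent for reasons independent of the goal. So the $\{R,S,R'\}$-structures form a monadically dependent class.
\item Your lexicographic extension of $<_0$ to $I$ is then first-order definable from this structure, with $I$ as a colour. This gives a monadically dependent expansion of $\mathscr D$ by a linear order, hence bounded twin-width by \zcref{thm:tww4}.
\item Model checking then follows from the algorithm for monadically dependent ordered structures. The tournament $R'$ and \zcref{cor:tourn} are unnecessary.
\end{enumerate}

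\textbf{Twins in $I$.} The lexicographic rule is only a preorder when two vertices of $I$ have the same in- and out-neighbourhoods. You must first reduce to the case where $I$ has no such twins, which is harmless because adding twins does not change twin-width. Algorithmically, place twins consecutively in the order.
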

\begin{proof}
    For each $\mathbf G\in\oak$, let $I(\mathbf G)$ be an independent set of $\mathbf G$ such that $\alpha(\mathbf G-I(\mathbf G))<k$.

    Let $\oF$ be a monadically dependent subclass of $\oak$.
    We first consider the case where, for each $\mathbf G\in\oF$, no two vertices in $I(\mathbf G)$ have the same in- and out-neighbors.

    Let $\oD\is\{\mathbf G-I(\mathbf G)\colon \mathbf G\in\oF\}$. As $\oD$ is a transduction of $\oF$, $\oD$ is monadically dependent, hence has bounded twin-width according to \zcref{thm:alpha}. 
    It follows from \zcref{thm:tww4} that 
    there exists an expansion $\mathbf H\mapsto \widehat{\mathbf H}$ of oriented graphs $\mathbf H\in\oD$ by a linear order $<$ such that 
    $\{\widehat{\mathbf H}\colon \mathbf H\in\oD\}$ is monadically dependent.

    Let $\sigma=\{R,S,R'\}$. We define a mapping $\mathbf G\mapsto \widetilde{\mathbf G}$ of oriented graphs $\mathbf G\in\oF$ to $\sigma$-structures $\widetilde{\mathbf G}$ as follows:
    \begin{itemize}
        \item $\widetilde{\mathbf G}$ has the same domain as $\mathbf G$;
        \item $\widetilde{\mathbf G}\models R(u,v)$ if none of $u$ and $v$ belongs to $I(\mathbf G)$ and $\mathbf G\models R(u,v)$;
        \item $\widetilde{\mathbf G}\models S(u,v)$ if  $u$ or $v$ belongs to $I(\mathbf G)$ and $\mathbf G\models R(u,v)$;
        \item $\widetilde{\mathbf G}\models R'(u,v)$ if none of $u$ and $v$ belongs to $I(\mathbf G)$ and $\widehat{\mathbf H}\models (u<v)$, where $\mathbf H=G-I(\mathbf G)$.
    \end{itemize}
    Note that, by construction, $\widetilde{\mathbf G}^R$ and $\widetilde{\mathbf G}^{R'}$ both belong to $\obk$ and that 
    $\widetilde{\mathbf G}$ has the property $\Pi^k_{R,R'}$.

    Let $\oE=\{\widetilde{\mathbf G}\colon \mathbf G\in\oF\}$.
    The classes $\oE^{\sigma\setminus\{R'\}}$ (which is a transduction of $\oF$) and $\oE^{\{R,R'\}}$ (which is obtained from $\{\widehat{\mathbf H}\colon \mathbf H\in\oD\}$ by adding isolated vertices) are both monadically dependent. Thus, according to \zcref{thm:main}, the class $\oE$ is monadically dependent.

    For $\widetilde{\mathbf G}\in\oE$, if $I=I(\mathbf G)$, then $R'$ defines a linear order $<_0$ on the complement of $I$.
    We define a global linear order on $\widetilde{\mathbf G}\in\oE$ by $u<v$ if
\begin{itemize}
    \item either none of $u$ and $v$ are in $I$ and $u<_0 v$,
    \item or $u\notin I$ and $v\in I$,
    \item or $u$ and $v$ are both in $I$ and the minimum $z$ in $<_0$ which is not connected the same way to $u$ and $v$ is such that $z$ is not adjacent to $u$ or $z$ is adjacent to both $u$ and $v$ and an in-neighbor of $u$.
\end{itemize}
This is indeed a linear order, as $I$ does not contain two vertices with the same in- and out-neighbors. As this linear order can be obtained by transduction, we deduce that the class $\oE$ admits a monadically dependent expansion by a linear order hence, according to \zcref{thm:tww4}, has bounded twin-width. As $\oF$ is a transduction of $\oE$, the class $\oF$ has bounded twin-width as well.

As adding twins does not change the twin-width, the general case also follows.

Concerning the parametrized complexity for \FO-model checking: starting with $\mathbf G\in\oF$, we can find an admissible independent set $I(\mathbf G)$ in polynomial time according to \zcref{lem:PMIS}. The linear order $<_0$ on $\mathbf G-I$ can be computed in polynomial time \cite{geniet2025orderlogictwinwidthtournaments}. This order can be extended into $<$ in polynomial time (here we put twins in $I$ as consecutive vertices in an arbitrary order). The above proof ensures that the class of expanded oriented graphs has bounded 
twin-width. Then, \FO-model checking is performed using the algorithm presented in \cite{tww5}.
\end{proof}

\begin{corollary}[store=cor:split]
    The class of oriented split-graphs is delineated by twin-width and \FO-model checking is {\FPT} on every monadically dependent class of oriented split graphs.
\end{corollary}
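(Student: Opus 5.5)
This corollary should follow by specializing \zcref{thm:delin} to $k=2$. A split graph $G$ has a vertex partition $V(G)=K\cup I$ with $K$ a clique and $I$ an independent set. Then $G-I=G[K]$ is complete, so $\alpha(G-I)\le 1<2$. Hence every split graph lies in $\mathscr A_2$, and every oriented split graph lies in $\overrightarrow{\mathscr A_2}$.

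The delineation part is then inherited directly. Delineation of a class only concerns its subclasses: each subclass has bounded twin-width if and only if it is monadically dependent. So it passes to any subclass of a delineated class. Since the class of oriented split graphs is a subclass of $\overrightarrow{\mathscr A_2}$, and the latter is delineated by \zcref{thm:delin}, the class of oriented split graphs is delineated by twin-width.

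For the algorithmic part, let $\mathscr F$ be a monadically dependent class of oriented split graphs. Then $\mathscr F$ is also a monadically dependent subclass of $\overrightarrow{\mathscr A_2}$, so \zcref{thm:delin} gives that \FO-model checking is \FPT{} on $\mathscr F$. Concretely, the algorithm from that proof applies. An independent set $I$ with $\alpha(G-I)<2$ is found in polynomial time via \zcref{lem:PMIS}; for split graphs a split partition can in fact be computed in linear time. The linear order $<_0$ on $G-I$ is then obtained, which is a tournament here, using \cite{geniet2025orderlogictwinwidthtournaments}. It is extended to a global linear order as in the proof of \zcref{thm:delin}, and finally the algorithm of \cite{tww5} is run.

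There is no real obstacle in this argument. The only point to check is that the \FPT{} claim in \zcref{thm:delin} covers every monadically dependent subclass, not just hereditary ones. It does: that proof works for an arbitrary monadically dependent $\oF\subseteq\oak$.
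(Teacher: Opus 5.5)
Your proposal is correct and follows the paper's route exactly. The paper states this corollary without a separate proof, as the case $k=2$ of \zcref{thm:delin}: a split partition $V=K\cup I$ gives $\alpha(G-I)\le 1<2$, and both delineation and the {\FPT} claim for monadically dependent subclasses pass directly to the subclass of oriented split graphs.
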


It is easily checked that the class $\mathscr A_k$ excludes the matching $M_k$ as an induced subgraph. Hence the following problem.  
\begin{problem}
    Is the class of all orientations of $M_k$-free graphs delineated by twin-width?
\end{problem}

In \cite{geniet2025first}, it is proved that the class of circular arc graphs is delineated. Since proper circular arc graphs are exactly those graphs that can be oriented as local tournaments \cite{huang1995structure}, it is natural to ask whether the class of local tournaments is delineated.
We prove that it is indeed the case, providing another generalization of  \zcref{thm:tourn}.

We recall some terminology introduced in \cite{huang1995structure}. 
An oriented graph $\mathbf G$ is \emph{round-oriented} if its vertices can be circularly ordered $v_i$ ($i\in\mathbb Z_n$) so that, for each vertex $v_i$, there are non-negative integers $r_i$ and $\ell_i$ such that out-neighbors of $v_i$ are $v_{i+1},\dots,v_{i+r_i}$ and the in-neighbors of $v_i$ are $v_{i-1},\dots,v_{i-\ell_i}$. To \emph{substitute} an oriented graph $\mathbf S$ to a vertex $v$ of an oriented graph $\mathbf G$ means to form a new oriented graph $\mathbf G'$ by replacing $v$ with $\mathbf S$ so that in $\mathbf G'$ every vertex of $\mathbf S$ dominates every out-neighbor of $v$ and is dominated by every in-neighbor of $v$.
An arc $(u,v)$ of $\mathbf G$ is \emph{balanced} if $u$ and $v$ have the same closed neighborhood in $\Gaif(\mathbf G)$.
A local tournament $\mathbf G$ is \emph{reduced} if $\mathbf G$ has no balanced edge, that is if $\Gaif(\mathbf G)$ has no adjacent twins. A \emph{U-reversal} of an oriented graph $\mathbf G$ is an operation which reverses the directions of all the unbalanced arcs within one component of $\overline{\Gaif(\mathbf G)}$, or reverses the directions of all unbalanced arcs between two fixed components of 
$\overline{\Gaif(\mathbf G)}$.

\begin{etheorem}[{\cite[Theorem 1.2]{huang1995structure}}]
\label{thm:lt_struct}
    Let $\mathbf G$ be a connected oriented graph. Then $\mathbf G$ is a local tournament if and only if it is obtained from a reduced round-oriented graph
    $\mathbf R$ by substituting a tournament $\mathbf T_v$ to each vertex $v$ of $\mathbf R$ and then performing U-reversals. 
\end{etheorem}

\begin{lemma}
    \label{lem:lt_struct_nb}
   Let $\mathbf G$ be a connected oriented graph such that
   $\overline{\Gaif(\mathbf G)}$ is not bipartite.
   
   Then $\mathbf G$ is a local tournament if and only if 
   it is obtained from a reduced round-oriented graph
    $\mathbf R$ by substituting a tournament $\mathbf T_v$ to each vertex $v$ of $\mathbf R$.     
\end{lemma}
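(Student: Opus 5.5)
Both directions go through \zcref{thm:lt_struct}. The point is that when $\overline{\Gaif(\mathbf G)}$ is not bipartite, the U-reversals can be undone (or are trivial), so they can be dropped from the description.

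The ``if'' direction is routine. A round-oriented graph is a local tournament: the out-neighbours of $v_i$ are $v_{i+1},\dots,v_{i+r_i}$, and among these $v_j$ dominates $v_{j'}$ whenever $j<j'$ (the standard fact for round orientations of proper circular arc graphs). The same holds for in-neighbourhoods. Substituting tournaments for vertices preserves this property. Indeed, the out-neighbourhood of a vertex of $\mathbf T_v$ consists of part of $\mathbf T_v$ together with the substituted blocks of $N^+_{\mathbf R}(v)$. Any two of these vertices are adjacent: inside a block because it is a tournament, and between blocks because the corresponding vertices of $\mathbf R$ are adjacent. So we recover a local tournament.

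For the ``only if'' direction, \zcref{thm:lt_struct} produces $\mathbf G$ from some $\mathbf G_0$ by U-reversals, where $\mathbf G_0$ is obtained from a reduced round-oriented graph by substitution. Substitution and U-reversals leave the Gaifman graph unchanged, so $\overline{\Gaif(\mathbf G_0)}=\overline{\Gaif(\mathbf G)}$ is not bipartite. The plan is to show that in this case every U-reversal applied to $\mathbf G_0$ yields an oriented graph of the same form. It then suffices to handle the two kinds of U-reversal separately.

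\begin{itemize}
\item \emph{Reversal of all unbalanced arcs between two complement components $C_1,C_2$.} The round structure forces $\overline{\Gaif}$ to have at most one non-trivial component, since the non-edges of a proper circular arc graph live in a ``cyclic'' arrangement. So I expect this case to collapse to the case of a single component when the complement is non-bipartite.
\item \emph{Reversal inside a non-bipartite complement component $C$.} This is the main obstacle. I would argue that reversing the unbalanced arcs of a round-oriented graph yields a round-oriented graph for the reversed circular order. Reversing the circular order $v_i\mapsto v_{-i}$ exchanges $r_i$ and $\ell_i$. Balanced arcs correspond to arcs inside substituted tournaments, and those may be kept as they are because the $\mathbf T_v$ are arbitrary tournaments.
\end{itemize}

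The delicate point is the case where the complement has several components. Only components containing an odd cycle obstruct the bipartite alternative. For components of $\overline{\Gaif}$ that are bipartite (including single vertices), the relevant arcs between different components are all balanced or forced. I would use the known fact from \cite{huang1995structure} that $\overline{\Gaif(\mathbf G)}$ non-bipartite implies $\overline{\Gaif(\mathbf G)}$ has exactly one non-trivial component, the others being isolated vertices. Isolated vertices of the complement are universal in $\Gaif$ and only carry balanced arcs, so U-reversals involving them are trivial. With this fact, every U-reversal is either trivial or the reversal of the whole unbalanced part. By the circular-order reversal argument above, that reversal preserves the form ``reduced round-oriented graph with tournaments substituted''.
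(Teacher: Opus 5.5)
Your argument follows essentially the paper's route. Both apply \zcref{thm:lt_struct}, use that the complement of the underlying graph is connected, and conclude that the only possible U-reversal reverses all unbalanced arcs. By reducedness of $\mathbf R$, these are exactly the arcs between distinct blocks $T_u\neq T_v$ (cf.\ \zcref{lem:approx}). That reversal replaces $\mathbf R$ by its reverse, which is again reduced and round-oriented for the reversed cyclic order. The paper phrases the last step through \cite[Theorem 4.5]{huang1995structure}: $\overline{\Gaif(\mathbf R)}$ is connected, and $\Gaif(\mathbf R)$ has only the two local-tournament orientations $\mathbf R$ and its reverse. Your direct check of the ``if'' direction is fine, though it also follows from \zcref{thm:lt_struct} with no U-reversal.

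One step in your last paragraph is false as stated. An isolated vertex $x$ of $\overline{\Gaif(\mathbf G)}$ is universal in $\Gaif(\mathbf G)$, and an arc between $x$ and a non-universal vertex is \emph{not} balanced. So a U-reversal between $\{x\}$ and another component is not trivial in general. This is harmless only because the case is vacuous. If $x$ is universal, then $V\setminus\{x\}=N^+(x)\cup N^-(x)$ is covered by two cliques, so $\overline{\Gaif(\mathbf G)}$ is bipartite.

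The same two-clique argument gives, without any citation, exactly the connectivity you need. It also replaces your speculative first bullet, whose claim fails without the hypothesis: the directed $4$-cycle is a reduced round-oriented graph whose complement is $2K_2$. Suppose $\overline{\Gaif(\mathbf G)}$ had components $C_1,\dots,C_m$ with $m\ge 2$. Any $x\in C_1$ is adjacent to all of $V\setminus C_1$, so this set is covered by the cliques $N^+(x)$ and $N^-(x)$, and $\overline{\Gaif(\mathbf G)}[V\setminus C_1]$ is bipartite. Repeating this with some $y\in C_2$ shows that $C_1$ also induces a bipartite subgraph. Hence $\overline{\Gaif(\mathbf G)}$ is bipartite, so a non-bipartite complement is connected and has no isolated vertices.
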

\begin{proof}
    According to  \zcref{thm:lt_struct}, $\mathbf G$ is obtained from a reduced round-oriented graph
    $\mathbf R$ by substituting a tournament $\mathbf T_v$ to each vertex $v$ of $\mathbf R$ and then performing U-reversals. As substituting  tournaments to vertices and performing U-reversals do not change the property of the complement of being bipartite, $\overline{\Gaif(\mathbf R)}$ is not bipartite. According to \cite[Theorem 4.5]{huang1995structure}, $\overline{\Gaif(\mathbf R)}$ is then connected and has exactly two orientations as a local tournament, obtained from each other by reversing all the arcs of $\mathbf R$. 
    As reversing all the edges in a reduced round-oriented graph produces a reduced round-oriented graph, we get that no U-reversal is needed by choosing the good reduced round-oriented graph.
\end{proof}

\begin{lemma}
\label{lem:approx}
    Let $\mathbf G$ be obtained from a reduced round-oriented graph $\mathbf R$ by substituting a tournament $\mathbf T_v$ 
    (with vertex set $T_v$) to each vertex $v$ of $\mathbf R$.

    For $x,y$ in $\mathbf G$ define $x\approx y$ if
    $x$ and $y$ have the same closed neighborhood in $\Gaif(\mathbf G)$.

    Then, the relation $\approx$ is an equivalence relation, and the equivalence class of $x\in T_v$ is $T_v$ and
    $\mathbf R$ is isomorphic to any induced oriented subgraph of $\mathbf G$ obtained by keeping exactly one vertex in each equivalence class.
\end{lemma}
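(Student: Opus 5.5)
The plan is to prove the statement directly from the definition of substitution. The decisive fact is that $\mathbf R$ is reduced, i.e.\ $\Gaif(\mathbf R)$ has no adjacent twins.

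First, $\approx$ is an equivalence relation, since it is the kernel of the map $x\mapsto N[x]$, where $N[x]$ is the closed neighborhood in $\Gaif(\mathbf G)$. Next we describe closed neighborhoods in $\mathbf G$. For $x\in T_v$, the substitution rule gives
\[
N_{\mathbf G}[x]=\bigcup_{w\in N_{\mathbf R}[v]}T_w .
\]
Indeed, $T_v$ induces a tournament, so $x$ is adjacent to all other vertices of $T_v$. Moreover, $x$ is adjacent to every vertex of $T_w$ for $w$ adjacent to $v$, and to nothing else. This formula shows that $T_v$ is contained in the $\approx$-class of $x$, because all vertices of $T_v$ have the same closed neighborhood.

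For the reverse inclusion, let $x\in T_v$ and $y\in T_w$ with $v\neq w$ and $x\approx y$. We have $y\in N[y]=N[x]$, so $w\in N_{\mathbf R}[v]$. The sets $T_u$ are nonempty and pairwise disjoint, so the union formula lets us recover the index set from the neighborhood. Hence $N_{\mathbf R}[v]=N_{\mathbf R}[w]$. Then $v$ and $w$ are adjacent twins, which means the arc between them is balanced in $\mathbf R$. This contradicts the assumption that $\mathbf R$ is reduced. So the equivalence class of $x$ is exactly $T_v$. This step is the only place where reducedness is used, and it is where some care is needed: one must check that ``same closed neighborhood in $\Gaif(\mathbf R)$'' is exactly the balanced-arc condition.

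Finally, pick one representative $t_v\in T_v$ for each $v$. The map $v\mapsto t_v$ is a bijection onto the chosen set. By the definition of substitution, $\mathbf G\models R(t_v,t_w)$ holds if and only if $\mathbf R\models R(v,w)$, for $v\neq w$. Every vertex of $T_v$ dominates every out-neighbor block of $v$ and is dominated by every in-neighbor block, and there are no arcs between $T_v$ and $T_w$ when $v,w$ are nonadjacent. So the induced subgraph on $\{t_v\}$ is isomorphic to $\mathbf R$, whichever representatives are chosen.
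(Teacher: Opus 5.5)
Your proof is correct and follows the paper's argument essentially step for step. Both use the union formula $N_{\mathbf G}[x]=\bigcup_{w\in N_{\mathbf R}[v]}T_w$, the fact that the blocks partition the vertex set to get $N_{\mathbf R}[v]=N_{\mathbf R}[w]$, reducedness of $\mathbf R$ to force $v=w$, and the substitution rule to read off the isomorphism. You also make explicit two points the paper leaves implicit: that the blocks must be nonempty, and that equal closed neighborhoods force adjacency, so the offending pair really is a balanced arc.
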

\begin{proof}
    By construction, if $x$ and $y$ belong to the same set $T_v$, then $x\approx y$.
    Conversely, assume that $x\in T_u, y\in T_v$, and $x\approx y$.
    Let $N_u$ and $N_v$ be the closed neighborhood of $u$ and $v$ in $\Gaif(\mathbf R)$. By construction, the closed neighborhood of $x$ (resp.\ of $y$) in $\Gaif(\mathbf G)$ is
    $\bigcup_{w\in N_u} T_w$ (resp.\ $\bigcup_{w\in N_v} T_w$).
    As the sets $T_v$ form a partition of the vertex set of $\mathbf G$, we deduce from $x\approx y$ that $u$ and $v$ have the same closed neighborhood in $\Gaif(\mathbf R)$. As $\mathbf R$ is reduced, we deduce $u=v$.

    That $\mathbf R$ is isomorphic to any induced oriented subgraph of $\mathbf G$ obtained by keeping exactly one vertex in each equivalence class is direct from the construction of $\mathbf G$ from $\mathbf R$.
\end{proof}

\begin{lemma}
\label{lem:round}
The class of reduced round-oriented graphs is delineated by twin-width.
\end{lemma}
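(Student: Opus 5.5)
The plan is to reduce to the delineation of circular arc graphs proved in \cite{geniet2025first}. One direction of delineation is free: a class of bounded twin-width is monadically dependent. So let $\oC$ be a monadically dependent class of reduced round-oriented graphs; the goal is to show that $\oC$ has bounded twin-width. The Gaifman graphs $\Gaif(\mathbf R)$, for $\mathbf R\in\oC$, form a transduction of $\oC$, hence a monadically dependent class. They are proper circular arc graphs, since round-oriented graphs are local tournaments \cite{huang1995structure}. By \cite{geniet2025first}, the class $\{\Gaif(\mathbf R)\colon \mathbf R\in\oC\}$ therefore has bounded twin-width. By \zcref{thm:tww4}, it then admits a monadically dependent expansion by a linear order.

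What remains is to recover the orientation from a structure we control. For this I would work with the round (circular) ordering $v_0,\dots,v_{n-1}$ witnessing that $\mathbf R$ is round-oriented, rather than with an arbitrary order. Cutting the circle at $v_0$ gives a linear order $<$. Relative to $<$, the arc relation is definable from $\Gaif(\mathbf R)$: for adjacent $x,y$, we have $x\to y$ exactly when $y$ lies in the cyclic interval following $x$. This is expressible from $<$, adjacency and a single color marking which neighbours wrap around past $v_0$. Hence $\mathbf R$ is a transduction of the ordered graph $(\Gaif(\mathbf R),<)$. If the class of these ordered graphs is monadically dependent, \zcref{thm:tww4} yields bounded twin-width for the ordered graphs, and hence for $\oC$, because transductions preserve monadic dependence and the orientation is obtained by a transduction.

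The key step is to show that the class $\{(\Gaif(\mathbf R),<)\colon \mathbf R\in\oC\}$ is monadically dependent. For this I would use the reducedness assumption. A reduced round-oriented graph has no adjacent twins. Within each out-neighbourhood $N^+(v)$, which induces a tournament, the cyclic position of a vertex is then determined by inclusion of in-neighbourhoods, and this is first-order definable. This makes the successor relation $v_i\mapsto v_{i+1}$ definable: $v_{i+1}$ is the out-neighbour of $v_i$ that is dominated by no other out-neighbour of $v_i$.

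A successor alone does not define a linear order, so here I would not try to define $<$ inside $\mathbf R$. Instead I would combine the two reducts through \zcref{thm:main}. Take $R$ to be the linear order $<$ and $R'$ to be the arc relation of $\mathbf R$. The reduct without $<$ is $\oC$, which is monadically dependent. The $\{<,R'\}$-reduct is controlled because $R'$ consists of cyclic intervals with respect to $<$, and I would argue that it has no large grid-like transformers. The main obstacle I expect is verifying the hypotheses of \zcref{thm:main}: that $\mathbf R^{R'}\in\obk$ and that $\Pi^k_{<,R'}$ holds. Both can fail when $\Gaif(\mathbf R)$ contains large bicliques. To handle this, I would first split according to \zcref{lem:lt_struct_nb} (the case where $\overline{\Gaif(\mathbf R)}$ is not bipartite). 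In the bipartite-complement case, $\Gaif(\mathbf R)$ is covered by two cliques, so the independence number is at most $2$. There \zcref{thm:alpha} applies directly and gives bounded twin-width. In the non-bipartite case, I would use the rigidity of the round orientation, which is unique up to reversal by \cite[Theorem 4.5]{huang1995structure}, to transfer the canonical order constructed in \cite{geniet2025first} for the proper circular arc graph $\Gaif(\mathbf R)$ to $\mathbf R$.
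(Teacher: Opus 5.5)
\paragraph{A genuine gap.} Your plan fails at the step you yourself call key. You never show that the ordered structures $(\Gaif(\mathbf R),<)$, with $<$ the cut round order, form a monadically dependent class, and the tools you propose for this do not work.

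\emph{Applying \zcref{thm:main} is circular.} With $R={<}$ and $\sigma=\{<,R'\}$, the reduct $\mathscr C^{\{<,R'\}}$ in condition (2) is the whole ordered class. So the hypothesis already contains what you want, and ``no large grid-like transformers'' is precisely the unproved claim.

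\emph{The hypothesis $\Pi^k_{<,R'}$ also fails.} Every $k$-set induces $K_k$ in $\Gaif(\mathbf M^{<})$, so $\Pi^k_{<,R'}$ would force $\alpha(\Gaif(\mathbf R))<k$. This is false for long directed cycles. These are reduced, round-oriented, and have non-bipartite complement, so your case split does not remove them. The obstruction is independent sets, not bicliques: $\Gaif(\mathbf R)$ is claw-free, so $\mathbf R\in\overrightarrow{\mathscr B_3}$ always holds.

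\emph{The fallback is not an argument.} Uniqueness of the round orientation up to reversal does not give a transduction that recovers the orientation. Moreover, the linear order you obtain from \zcref{thm:tww4} in your first step is never related to the round order you then use. So the bounded twin-width of the Gaifman graphs (via \cite{geniet2025first}) is never actually exploited. Separately, a unary colour cannot mark which neighbours wrap around, since that is a property of pairs. You could avoid this detour by working with $(\mathbf R,<)$ directly, since $\mathbf R$ is a reduct of it.

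\paragraph{The paper's missing idea, and how your route could be repaired.} The paper uses an encoding in which the orientation survives. Each vertex $v$ is tripled into $(v,-1),(v,0),(v,1)$, and adjacencies are chosen so that the resulting unoriented graph $H$ is a circular-arc graph and a transduction of $\mathbf R$. Conversely, $R$ is definable in $H$ coloured by the second coordinate, because $(v,1)$ is the only neighbour of $(v,0)$ coloured $1$ that is not adjacent to all neighbours of $(v,0)$. Thus $\oC$ and the class of these graphs $H$ are transduction-equivalent, and delineation of circular-arc graphs transfers in both directions.

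Your own route can also be closed without \zcref{thm:main}, as follows.
\begin{enumerate}
\item If $v_i\to v_j$, then every vertex strictly between $v_i$ and $v_j$ on the circle is an in-neighbour of $v_j$.
\item Hence $i\mapsto i+r_i$ (read in $[i,i+n-1]$) is non-decreasing, and likewise $i\mapsto i-\ell_i$.
\item So the adjacency matrix of $(\mathbf R,<)$ is split into constant zones by a bounded number of monotone staircases.
\item These staircases meet only $O(k)$ cells of a $k\times k$ division, so there is no large mixed minor.
\item By \cite{twin-width1}, $(\mathbf R,<)$ then has bounded twin-width outright.
\end{enumerate}
That argument, however, is exactly what your proposal leaves out.
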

\begin{proof}
    Let $\oC$ be a monadically dependent class of reduced round-oriented graphs.
    By transduction, we transform $\mathbf G$ into $H$ as follows: each vertex  $v$ is copied into $(v,-1), (v,0)$, and $(v,1)$. We have
    $H\models E((u,i),(v,j))$ if $u=v$ and $(i,j)\notin\{(-1,1),(1,-1)\}$ or
    $u\neq v$ and
    \begin{itemize}
        \item $\mathbf G\models R(v,u)$ and ($i=-1$ or $j=1$),
        \item or $\mathbf G\models R(u,v)$ and ($i=1$ or $j=-1$),
        \item or $\mathbf G\models \exists w\ R(u,w)\wedge R(w,v)$ and $(i,j)=(1,-1)$,
        \item or $\mathbf G\models \exists w\ R(v,w)\wedge R(w,u)$ and $(i,j)=(-1,1)$.
    \end{itemize}
    (The construction of the circular arc graph $H$ from a reduced round-oriented graph is shown in \zcref{fig:circ}.)
  
    The graph $H$ is a circular arc graph.
    Let $\mathscr D$ be the class of the so-obtained graphs $H$.
    As $\mathscr D$ can be transduced from $\oC$, it is monadically dependent. As the class of circular arc graphs is delineated by twin-width \cite{geniet2025first}, $\mathscr D$ has bounded twin-width.

    To go back, we color all the vertices $(v,i)$ by color $i$
    and $i\in\{-1,0,1\}$ we denote by $P_i(x)$ the property that $x$ is colored $i$. We keep only the vertices colored $0$. Note that for every vertex $(v,0)$ the vertex $(v,1)$ is the only neighbor of $(v,0)$ colored $1$ that is not a neighbor of all the neighbors of $(v,0)$. 
    From this property, we derive that we can let $R(x,y)$ if
\[
(x\neq y)\wedge \exists y',y'' \bigl( P_1(y')\wedge P_{-1}(y'')\wedge E(y,y')\wedge E(y,y'')\wedge \neg E(y',y'')\wedge E(x,y'')\bigr).
\]
    This proves that $\oC$ is a transduction of $\mathscr D$.
    Hence, $\oC$ has bounded twin-width.
\end{proof}

\begin{figure}[ht]
    \centering
    \includegraphics[width=\linewidth]{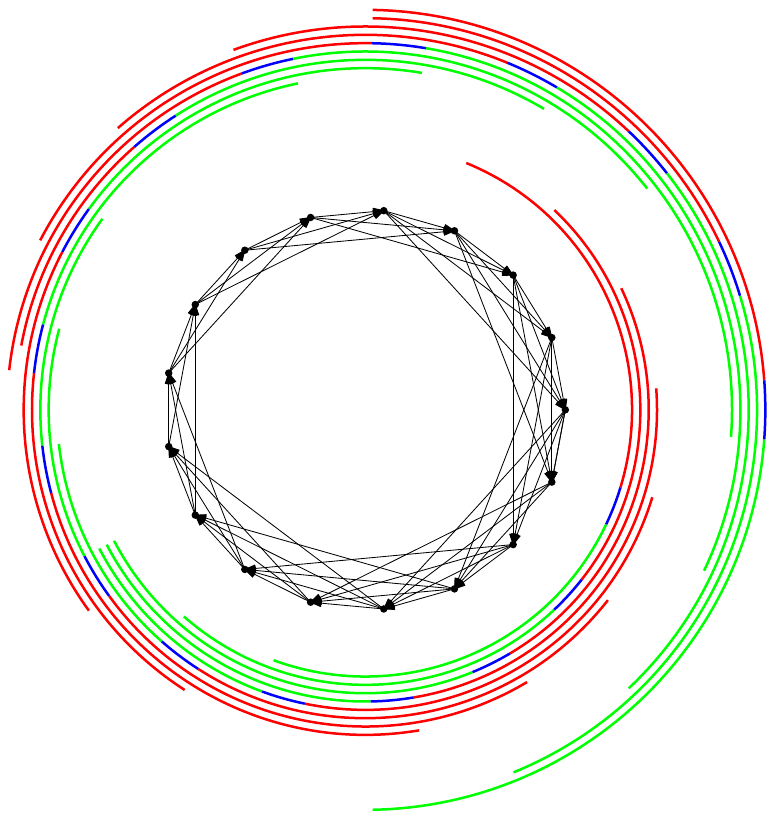}
    \caption{The circular arc graph constructed from reduced round-oriented graph (in the middle). The arcs shown on a same circle correspond to a same vertex $u$ ($(u,-1)$ is red, $(u,0)$ is blue, and $(u,1)$ is green.)}
    \label{fig:circ}
\end{figure}

\begin{theorem}[store=thm:ltourn]
    The class of local tournaments is delineated by twin-width. In other words,
    a class of local tournaments has bounded twin-width if and only if it is monadically dependent.

    Moreover, if a class $\mathscr C$ of local tournaments has bounded twin-width, then \FO-model checking is {\FPT} on $\mathscr C$.
\end{theorem}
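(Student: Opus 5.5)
Bounded twin-width implies monadic dependence for any class of binary structures, so only the converse needs work. Let $\mathscr C$ be a monadically dependent class of local tournaments. Twin-width of a structure is the maximum over its connected components (up to an additive constant), and components are obtained by transduction, so we may assume every $\mathbf G\in\mathscr C$ is connected. We then split $\mathscr C$ into two subclasses: $\mathscr C_1$, where $\overline{\Gaif(\mathbf G)}$ is bipartite, and $\mathscr C_2$, where it is not.

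\textbf{Case $\mathscr C_1$.} The vertex set of $\mathbf G$ is covered by two cliques of $\Gaif(\mathbf G)$, so $\alpha(\mathbf G)\le 2$. Hence $\mathscr C_1\subseteq\osk$ for $k=3$. By \zcref{thm:alpha}, $\mathscr C_1$ has bounded twin-width, and \FO-model checking is {\FPT} on it.

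\textbf{Case $\mathscr C_2$.} By \zcref{lem:lt_struct_nb}, each $\mathbf G$ is obtained from a reduced round-oriented graph $\mathbf R$ by substituting tournaments $\mathbf T_v$. The relation $\approx$ of \zcref{lem:approx} is \FO-definable. Coloring one representative per class then yields $\mathbf R$ as a transduction of $\mathbf G$. The class $\{\mathbf R\}$ is therefore monadically dependent, and \zcref{lem:round} gives it bounded twin-width. Likewise, each $\mathbf T_v=\mathbf G[T_v]$ is an induced subgraph of $\mathbf G$, so the class of these tournaments is monadically dependent and has bounded twin-width by \zcref{thm:tourn}.

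To recombine, use \zcref{thm:tww4}. Choose a monadically dependent linear order $<_R$ on the quotients $\mathbf R$ and linear orders $<_v$ on the modules $T_v$, all taken from monadically dependent ordered classes. Define the global order lexicographically: first compare by $<_R$ on classes, then by $<_v$ inside a class. It remains to show that the resulting class of ordered structures is monadically dependent. I would derive this from the general fact that substitution (lexicographic product) of ordered structures from bounded twin-width classes into a bounded-twin-width ordered quotient has bounded twin-width. Concretely, a contraction sequence first contracts each module $T_v$ internally (error contained in the module, since $T_v$ is a module) and then follows the sequence of $\mathbf R$. This step, checking that modules do not blow up red degree, is the main obstacle. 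An alternative is to express the ordered $\mathbf G$ as a transduction of a suitable structure built from $\mathbf R$ together with the disjoint union of the $\mathbf T_v$ and the partition into classes.

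For the algorithmic part, it is cleaner to avoid relying on a decomposition that is ill-defined after U-reversals. Instead, compute things directly. First find the components and decide bipartiteness of the complement. In the first case, apply \zcref{thm:alpha}. In the second case, compute $\approx$, build $\mathbf R$, and compute the ordering of $\mathbf R$ via the circular-arc transduction of \zcref{lem:round}. Here one uses the polynomial-time contraction sequences for circular arc graphs from \cite{geniet2025first}, and translates back, which is polynomial. Compute the orders on the tournaments $\mathbf T_v$ by \cite{geniet2025orderlogictwinwidthtournaments}. Then combine them lexicographically and run the algorithm of \cite{tww5}.
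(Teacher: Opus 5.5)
Your proposal is correct and follows essentially the same route as the paper. It splits according to whether $\overline{\Gaif(\mathbf G)}$ is bipartite and uses \zcref{thm:alpha} on the first part. On the second part it combines the decomposition of \zcref{lem:lt_struct_nb,lem:approx} with \zcref{lem:round}, \zcref{thm:tourn} and the fact that substitution preserves bounded twin-width, which you justify by contracting modules first while the paper just cites it. The algorithmic pipeline is also the same. Your preliminary reduction to connected components is a genuine improvement. \zcref{lem:lt_struct_nb} requires $\mathbf G$ to be connected: two disjoint directed $4$-cycles form a local tournament with non-bipartite complement that is not of the stated form. The paper's proof applies that lemma without checking connectivity.
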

\begin{proof}
    Let $\oC$ be a monadically dependent class of local tournaments.
    We split this class into $\oC_1$ and $\oC_2$, where 
    $\mathbf G\in\oC$ belongs to $\oC_1$ if $\overline{\Gaif(\mathbf G)}$
    is bipartite, and to $\oC_2$ otherwise.
    Note that we can decide whether $\mathbf G\in\oC_1$ or $\mathbf G\in\oC_2$ in linear time.
    Moreover, in order to prove that the class of all local tournaments is delineated by twin-width, it will be sufficient to prove that $\oC_1$ and $\oC_2$ both have bounded twin-width.
    
    As every $\mathbf G\in\oC_1$ has independence number at most $2$, $\oC_1$ has bounded twin-width and \FO-model checking is {\FPT} for oriented graphs in $\oC_1$
    according to \zcref{thm:alpha}.
    Thus, we may restrict our attention to input graphs in the class $\oC_2$.

    According to \zcref{lem:lt_struct_nb}, each $\mathbf G\in\oC_2$  is obtained from a reduced round-oriented graph
    $\Red(\mathbf G)$ by substituting a tournament $\mathbf T_v$ to each vertex $v$ of $\Red(\mathbf G)$. Moreover, the partition into sets $T_v$ and the reduced round-oriented graph $\Red(\mathbf G)$ can be computed in quadratic time using the equivalence relation $\approx$ introduced in 
    \zcref{lem:approx}.

    We denote by $\overrightarrow{\mathscr T}$
    (resp.\ $\overrightarrow{\mathscr R}$) the class of all the tournaments $\mathbf T_v$ (resp.\ of all the reduced round-oriented graphs $\Red(\mathbf G)$) derived from oriented graphs $\mathbf G\in\oC_2$.
    As all the tournaments $\mathbf T_v=\mathbf G[T_v]$ and 
    the reduced round-oriented graphs $\Red(\mathbf G)$ derived from some $\mathbf G\in\oC_2$ are induced subgraphs of $\mathbf G$, both $\overrightarrow{\mathscr T}$
    and $\overrightarrow{\mathscr R}$ are transductions of $\oC_2$, hence are monadically dependent.

    According to \zcref{thm:tourn}, the class of all tournaments is delineated by twin-width, hence $\overrightarrow{\mathscr T}$ has bounded twin-width. It follows from \cite{geniet2023first,geniet2025orderlogictwinwidthtournaments} that there is a polynomial time algorithm that computes, for an input tournament $\mathbf T\in\overrightarrow{\mathscr T}$, a linear order $L_1(\mathbf T)$ such that the class of all the tournaments in $\overrightarrow{\mathscr T}$ expanded by the computed linear orders has bounded twin-width.

    On the other hand, according to \zcref{lem:round}, the class of all reduced round-oriented graphs is delineated by twin-width, hence $\overrightarrow{\mathscr R}$ has bounded twin-width.
    As the oriented graphs in $\oC_2$ are obtained from graphs in the class $\overrightarrow{\mathscr R}$ (which has bounded twin-width) by substituting graphs from $\overrightarrow{\mathscr T}$ (which has bounded twin-width) to vertices, the class $\oC_2$ has bounded twin-width.
    
    We now consider the parametrized complexity aspect. 
    As noticed above, the reduced round-oriented graph $\Red(\mathbf G)$ associated to $\mathbf G\in\oC_2$ can be computed in quadratic time. We compute the circular arc graph $H(\Red(\mathbf G))$ associated to $\Red(\mathbf G)$ in the proof of \zcref{lem:round} in polynomial time. Let $\mathscr A$ be the corresponding class of circular arc graphs. As $\mathscr A$ is a transduction of $\overrightarrow{\mathscr R}$, the class $\mathscr A$ has bounded twin-width. It follows from \cite{geniet2025first} that we can compute in polynomial time, for each circular arc graph $H\in\mathscr A$, a linear order $L_2(H)$, in such a way that the class of all the graphs $H\in\mathscr A$ expanded by the computed linear orders has bounded twin-width.
    As $\Red(\mathbf G)$ can be obtained as a simple interpretation of $H(\Red(\mathbf G))$, the class obtained from $\overrightarrow{\mathscr R}$ by expanding each $\Red(\mathbf G)$ by (the restriction of) the linear order $L_2(H(\Red(\mathbf G)))$ has bounded twin-width.

    As substitution of binary structures from a class with bounded twin-width to the vertices of binary structures from a class with bounded twin-width produces a class with bounded twin-width, we derive that the class obtained by substituting in the expansion of $\Red(\mathbf G)$ ordered tournaments from the expansion of $\overrightarrow{\mathscr T}$ to vertices has bounded twin-width. This substitution, which can easily be computed in polynomial time, is an expansion $\mathbf G^+$ of $\mathbf G$ by a linear order. Then, the fixed parameter tractability of \FO-model checking follows from the fixed parameter tractability of \FO-model checking for monadically dependent classes of ordered binary structures~\cite{tww5}.
\end{proof}

Naturally, one could wonder whether an analog of \zcref{thm:nalpha} could hold, where oriented graph with bounded independence number is replaced by local tournament.
The answer is negative, as circuits are local tournaments and every class of cubic graphs can be expanded into a monadically dependent class by a circuit.

\bibliographystyle{amsplain}
\bibliography{ref}
\end{document}